\newtheorem{theorem}{Theorem}
\newtheorem{corollary}{Corollary}
\newtheorem{definition}{Definition}
\newtheorem{lemma}{Lemma}
\newtheorem{remark}{Remark}
\begin{document}

\title{Fixed-time cluster synchronization for complex networks via pinning control}

\author{Xiwei~Liu,~{\it{Member,~IEEE,}}~
        and~Tianping~Chen,~{\it{Senior~Member,~IEEE}}
\thanks{This work is jointly supported by the National Natural Sciences Foundation of China (Nos. 61203149, 61273211 and 61233016), the National Basic Research Program of China (973 Program) under
Grant No. 2010CB328101, ``Chen Guang'' project supported by Shanghai
Municipal Education Commission and Shanghai Education Development
Foundation under Grant No. 11CG22, and the Fundamental Research Funds
for the Central Universities.}
\thanks{Xiwei Liu is with Department of Computer Science and
Technology, Tongji University, and with the Key Laboratory of Embedded System and Service Computing,
Ministry of Education, Shanghai 200092, China. E-mail:
xwliu@tongji.edu.cn; xwliu.sh@gmail.com}
\thanks{Corresponding Author Tianping Chen is with the School of Mathematical/Computer
Sciences, Fudan University, 200433, Shanghai, P.R. China. E-mail:
tchen@fudan.edu.cn}
}

\maketitle

\begin{abstract}
\boldmath {\bf In this paper, the fixed-time cluster synchronization problem for complex networks via pinning control is discussed. Fixed-time synchronization has been a hot topic in recent years, which means that the network can achieve synchronization in finite-time and the settling time is bounded by a constant for any initial values. To realize the fixed-time cluster synchronization, a simple distributed protocol by pinning control technique is designed, whose validity is rigorously proved, and some sufficient criteria for fixed-time cluster synchronization are also obtained. Especially, when the cluster number is one, the cluster synchronization becomes the complete synchronization problem; when the intrinsic dynamics for each node is missed, the fixed-time cluster synchronization becomes the fixed-time cluster (or complete) consensus problem; when the network has only one node, the coupling term between nodes will disappear, and the synchronization problem becomes the simplest master-slave case, which also includes the stability problem for nonlinear systems like neural networks. All these cases are also discussed. Finally, numerical simulations are presented to demonstrate the correctness of obtained theoretical results.}
\end{abstract}

\begin{IEEEkeywords}
\boldmath {\bf Complex networks, cluster synchronization, fixed-time, finite-time, pinning control.}
\end{IEEEkeywords}

\section{Introduction}
Synchronization of complex networks \cite{PC98}-\cite{LC15} has been a hot topic in recent decades, and it is usually described by the following continuous-time ordinary differential equation:
\begin{align*}
\dot{x}_i(t)=f(x_i(t))+g_i(x_1(t),\cdots,x_N(t)), ~~i=1,\cdots,N
\end{align*}
where $x_i\in R^n$ denotes the state of agent $i$; the first term $f(x_i(t))$ denotes the intrinsic dynamics of agent $i$, when $f=0$, the synchronization model becomes the consensus model; and the second term $g_i(x_1(t),\cdots,x_N(t))$ means the diffusive coupling from agent $i$'s neighbours. Although each agent just needs to get the local information of its neighbours, under the above algorithm, the whole network can display a collective behavior----synchronization, i.e.,
\begin{align}\label{syn}
\lim\limits_{t\rightarrow\infty}\|x_i(t)-x_j(t)\|=0; ~~i,j=1,2,\cdots,N
\end{align}
where $\|\cdot\|$ denotes some norm. It is a fundamental research topic in decentralized control, and has broad applications in cooperative control of unmanned air vehicles (UAVs), formation control of mobile robots, etc. The most popular model in the synchronization literature is the linear coupling protocol
\begin{align*}
\dot{x}_i(t)=f(x_i(t))+c\sum_{j=1}^Na_{ij}(x_j(t)-x_i(t)), ~~i=1,\cdots,N
\end{align*}
and the nonlinear coupling protocol
\begin{align*}
\dot{x}_i(t)=f(x_i(t))+c\sum_{j=1}^Na_{ij}\phi(x_j(t),x_i(t)), ~~ i=1,\cdots,N
\end{align*}

\emph{(Infinite-time vs Finite-time)} Under these coupling protocols, the sufficient criteria for asymptotic synchronization or exponential synchronization can be obtained. Synchronization rate is an important
performance indicator for protocol. However, for the above two types of synchronization speed, the disadvantage is that the completely same of each node can never occur in finite-time, i.e., there does not exist a constant $T$ called the \emph{settling time}, which may depend on the initial values, such that for any $i,j=1,2,\cdots,N$
\begin{align}\label{st}
\lim\limits_{t\rightarrow T}\|x_i(t)-x_j(t)\|=0;~\mathrm{and}~ x_i(t)=x_j(t), ~~\forall t\ge T.
\end{align}
In many applications as robotics, a standard problem in system theory is to develop controllers which drive a system to a given position as fast as possible \cite{H86}. Moreover, chaos synchronization plays an important role in the literature, if the synchronization does not realize in a finite-time, for example, exponential synchronization is realized, then the coupling protocol or the external controllers should be always added on the network, since if they are cancelled, from the property of chaotic oscillator, a small error may cause a large difference between nodes. Furthermore, finite-time synchronization can lead to better system performances in the disturbance rejection and robustness against uncertainty. Based on above reasons, an investigation of finite-time synchronization under new coupling protocols is important both in theoretical analysis and real applications.

Next, we review some progress in the finite-time literature. Finite-time synchronization (or consensus) heavily relies on the development of finite-time stabilization theory. Finite-time stabilization problems have been studied mostly in the contexts of optimality, controllability, and deadbeat control for several decades.
These control laws are usually time-varying, discontinuous, or even depending directly on the initial conditions of considered systems, for example, in \cite{C06}, the author considered the following controllers
\begin{align*}
\dot{x}=-\frac{\mathrm{grad}(f)(x)}{\|\mathrm{grad}(f)(x)\|_2}, ~~\mathrm{and}~~\dot{x}=-\mathrm{sgn}(\mathrm{grad}(f)(x)),
\end{align*}
by using the theory of Filippov, the author proved the finite-time stability and applied it on the network consensus problem. Recently, finite-time stability and finite-time stabilization via continuous time-invariant feedback have been studied. For example, \cite{BB98} studied the finite-time stability of a homogeneous systems and designed the bounded continuous finite-time stabilizing feedback laws for the double integrators; \cite{Hong02b} studied global finite-time control of robot systems through state feedback and output feedback control; \cite{BB00} discussed the finite-time stability of continuous systems, investigated its sensitivity to perturbations and rigorously set up a general framework for finite-time stabilization, many papers' works were based on this excellent result; \cite{SX08}-\cite{SH09} proposed two new sufficient conditions for local finite-time stability and designed an observer for a class of homogeneous systems with Lipschitz nonlinearity.

As for the finite-time synchronization/consensus literature, in \cite{XWCG09}, the author proposed a new class of finite-time consensus protocol:
\begin{align*}
\dot{x}_i(t)=\beta\mathrm{sig}\bigg(\sum_{j\in N_i}W_{ij}(x_j-x_i)\bigg)^{\alpha}+\gamma\sum_{j\in N_i}W_{ij}(x_j-x_i),
\end{align*}
where $0<\alpha<1, \beta>0, \gamma\ge 0$, they proved that if the network had a spanning tree, then the above protocol can realize the finite-time consensus. \cite{WX10} investigated the finite-time consensus under the protocol:
\begin{align*}
\dot{x}_i(t)=\sum_{j\in N_i}a_{ij}\mathrm{sig}(x_j-x_i)^{\alpha_{ij}(t)}, ~~~0<\alpha_{ij}(t)<1,
\end{align*}
they proved that if the sum of time intervals, in which the interaction topology was connected, was sufficiently large, the above protocol can realize finite-time consensus for both bidirected and unidirected networks. \cite{JW11} considered the finite-time weighted average consensus of time-varying topology with respect to the monotonic function $g$ under the protocol:
\begin{align*}
\dot{x}_i(t)=\frac{1}{{\omega_idg}/{d{x}_i}}
\sum_{j=1}^na_{ij}(t)\mathrm{sig}(h(x_j)-h(x_i))^{\alpha}, 0<\alpha<1,
\end{align*}
when $\omega_i=1$, then the above protocol was the one in \cite{JW09}. \cite{WH10} considered the finite-time $\chi$-consensus models,
\begin{align*}
\dot{x}_i(t)=\frac{1}{|\frac{\partial{\chi}}{\partial{x}_i}|}
\sum_{j=1}^na_{ij}(t)\psi(\mathrm{sig}(x_j-x_i)^{\alpha}), ~~~0\le \alpha<1,
\end{align*}
and
\begin{align*}
\dot{x}_i(t)=\frac{1}{|\frac{\partial{\chi}}{\partial{x}_i}|}
\sum_{j=1}^na_{ij}(t)[\psi(x_j^{\alpha})-\psi(x_i^{\alpha})], ~~~0\le \alpha<1,
\end{align*}
using the homogeneity property of $\psi$ and the Lyapunov function, the authors finally proved the finite-time consensus for undirected (the first model) and directed (the second model) networks. With the similar analysis, \cite{SCGDL12} discussed the finite-time consensus of the leader-following multi-agent systems with jointly-reachable leader and switching jointly-reachable leader for the first-order model
\begin{align*}
\dot{x}_i(t)=&\sum_{v_j\in N_i(t)}a_{ij}(t)\phi(\mathrm{sig}(x_j(t)-x_i(t))^{\alpha})\\
&-b_i(t)\phi_2(\mathrm{sig}(x_i(t)-x_0)^{\alpha}),~~~0\le \alpha<1,
\end{align*}
and for the second-order model
\begin{align*}
\dot{x}_i(t)=&v_i(t),\\
\dot{v}_i(t)=&\sum_{v_j\in N_i(t)}a_{ij}(t)[\varphi_1(\mathrm{sig}(x_j(t)-x_i(t))^{\alpha_1})\\
&~~~~~~~~~~~~~~+\varphi_2(\mathrm{sig}(v_j(t)-v_i(t))^{\alpha_2})]\\
&-b_i(t)[\varphi_3(\mathrm{sig}(x_j(t)-x_0(t))^{\alpha_1})\\
&~~~~~~+\varphi_4(\mathrm{sig}(v_j(t)-v_0(t))^{\alpha_2}))],
\end{align*}
where $\dot{x}_0(t)=v_0(t), \dot{v}_0(t)=0$ and $0<\alpha_1<1, \alpha_2=\frac{2\alpha_1}{1+\alpha_1}$. Moreover, in \cite{ZDW15}, by using the tools from homogeneous theory, the authors investigated finite-time consensus with second-order integrators based on both relative position and relative velocity measurements as
\begin{align*}
\dot{q}_i(t)=&p_i(t),\\
\dot{p}_i(t)=&-k_1\tanh\bigg[\mathrm{sig}\bigg(\sum_{j=1}^Na_{ij}
(q_i-q_j)\bigg)^{\alpha_1}\bigg]\\
&-k_2\tanh\bigg[\mathrm{sig}
\bigg(\sum_{j=1}^Na_{ij}(p_i-p_j)\bigg)^{\alpha_2}\bigg],
\end{align*}
where $0<\alpha_1<1, \alpha_2=\frac{2\alpha_1}{1+\alpha_1}$,
and they also considered the case with only the relative position measurements obtained. In \cite{LHYC14}, the optimal finite-time stabilization problem was considered, and they designed a new switching protocol covering both continuous control ($0<\alpha<1$) and discontinuous one ($\alpha=0$), which can shorten the stabilization time; with the same aim, the authors in \cite{LLYC15} also considered the optimal consensus time for multi-agents systems and proposed a new switching protocol. Moreover, in \cite{WLS2014}, the distributed finite-time containment control algorithm for double-integrator multi-agent systems with multiple dynamic or stationary leaders was proposed and its validity was also rigorously proved.

\emph{(Finite-time vs Fixed-time)} Although many finite-time results are obtained in the above short review, the settling time heavily depends on the initial conditions, which limits the practical applications, since the knowledge of initial conditions is not available in advance. Therefore, to overcome this drawback, in \cite{P12}, a new concept called the \emph{fixed-time stability} is proposed, if it is globally finite-time stable and the settling time function is bounded for any initial values. The technique is adding extra terms on the previous finite-time model, which will be stated more carefully in the next section. Following this streamline of dealing with fixed-time stability, many new approaches and results are obtained, see \cite{PPS12}-\cite{Z15} and references therein. For example, \cite{ZT14a} and \cite{ZT14} proposed two new protocols to realize the fixed-time stabilization, and the upper bounds for the settling time were also estimated; moreover, they also applied these new protocols on fixed-time consensus problem for multi-agent systems. \cite{Z15} proposed a fixed-time terminal sliding-mode control protocol for a class of second-order nonlinear systems in the presence of uncertainties and perturbations. In \cite{MJD15}, the authors considered the finite-time consensus for multi-agent systems with cooperative and antagonistic interactions, and they proved that the states of all agents can reach agreement in a finite-time regarding consensus values that were the same in modulus but may not be the same in sign.

Although many papers have considered the finite-time and fixed-time consensus problem, the discussions about synchronization and consensus have a great difference because of the existence of intrinsic dynamics. There are few works to discuss the finite-time and fixed-time synchronization. For example, \cite{CL09} investigated the finite-time complete synchronization, while \cite{ZS15} studied the fixed-time complete synchronization. On the other hand, cluster synchronization is a more practical phenomenon than the complete synchronization, which is significant in biological sciences, communication engineering, etc. The cluster synchronization means that nodes in the same cluster can achieve complete synchronization, while nodes in different clusters have different dynamical behaviors. Of course, when all the nodes lie in the same cluster, then the cluster synchronization becomes the complete synchronization. There were some paper considering the cluster synchronization exponentially like \cite{WZC09,LC2011}, \cite{MLZ2006}-\cite{LLC2010}, and \cite{CFZW14} also considered the finite-time cluster synchronization of Markovian switching complex networks with stochastic perturbations by adding the linear negative feedback controllers.

{\bf{To our best knowledge, there are no papers considering the \emph{fixed-time} cluster synchronization via pinning control, which will be the subject of this paper.}}

This paper is organized as follows. In Section \ref{pre}, some necessary definitions, lemmas, assumptions and notations are given. A simple coupling protocol to ensure the fixed-time cluster synchronization via pinning control is also proposed. In Section \ref{PR1}, we first rigorously prove the effectiveness of proposed protocol, then the fixed-time complete synchronization is also carefully discussed. Furthermore, some numerical simulations are given in Section \ref{sim} to show the correctness of obtained theoretical results. Finally, a conclusion and some discussion about future work are presented in Section \ref{con}.

\section{Preliminaries}\label{pre}
In this section, we present some definitions, lemmas, and notations, which will be useful throughout this paper.
First, we will give some general results on the dynamical systems.

If a nonnegative function $V(t)$ satisfies
\begin{eqnarray*}
\dot{V}(t)=-\alpha\mu(V(t)),
\end{eqnarray*}
where $\alpha>0$, functions $\mu(V(t))>0$, $V(t)>0$; $\mu(0)=0$.

Because $\dot{V}(t)> 0$, therefore, $V(t)$ is decreasing.

Define the function $V_{1}(t)$ as follows:
\begin{align*}
V_{1}(s)=-\alpha^{-1}\int_{V(0)}^{s}\mu^{-1}(V)dV,
\end{align*}
then,
\begin{align}
t= V_{1}(V(t)),
\end{align}
and
\begin{align}
V_{1}^{-}(t)= V(t),
\end{align}
where $V_{1}^{-}$ is the inverse function of $V$.

If $\mu(s)=s$, then
\begin{align*}
t=V_{1}(V(t))=-\alpha^{-1}\int_{V(0)}^{V(t)}V^{-1}dV=-\alpha^{-1}\log\frac{V(t)}{V(0)},
\end{align*}
and
$$V(t)=V(0)e^{-\alpha t}.$$

If
$\mu(s)=s^{p}$, $p\ne 1$, then
$$V_{1}(s)=\frac{-1}{\alpha(1-p)}(s^{1-p}-V^{1-p}(0)),$$
which implies
$$V(t)=[\alpha(p-1)t+V^{1-p}(0)]^{\frac{1}{1-p}}.$$

In case $p<1$, then $V(t)=0$, if
$$t\ge\frac{V^{1-p}(0)}{\alpha(1-p)}.$$

On the other hand, in case $p>1$,
$$V(t)=\frac{1}{[\alpha(p-1)t
+V^{1-p}(0)]^{\frac{1}{p-1}}}.$$

As direct consequences, we show some useful lemmas on the finite-time and fixed-time stability.

\begin{lemma} (See \cite{BB00})
If a nonnegative function $V(t)$ satisfies
\begin{align}
\dot{V}(t)\le
-\alpha V^{p}(t),~0<p<1
\end{align}
where $\alpha>0$. Then, $V(t)\equiv 0$, if
\begin{align}
t\ge \frac{V^{1-p}(0)}{\alpha(1-p)}.
\end{align}
\end{lemma}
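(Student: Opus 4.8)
The plan is to prove this directly by separating variables in the differential inequality, integrating, and then reading off the condition on $t$ that forces $V$ to vanish. This is a standard comparison argument: the hypothesis $\dot V(t)\le -\alpha V^p(t)$ with $0<p<1$ controls the decay rate from above, so I expect to bound the actual trajectory by the solution of the associated equality $\dot W=-\alpha W^p$, whose behaviour has in fact already been computed in the preamble of this section.

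First I would handle the region where $V(t)>0$. There the inequality can be rewritten as $V^{-p}(t)\dot V(t)\le -\alpha$, i.e. $\frac{d}{dt}\big(\tfrac{1}{1-p}V^{1-p}(t)\big)\le -\alpha$. Integrating from $0$ to $t$ (while the trajectory stays positive) gives $\tfrac{1}{1-p}\big(V^{1-p}(t)-V^{1-p}(0)\big)\le -\alpha t$, hence
\begin{align*}
V^{1-p}(t)\le V^{1-p}(0)-\alpha(1-p)t.
\end{align*}
Since $0<p<1$ the exponent $1-p$ is positive, so the right-hand side is a strictly decreasing affine function of $t$ that reaches $0$ exactly at the claimed threshold $t^{*}=\dfrac{V^{1-p}(0)}{\alpha(1-p)}$. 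This reproduces the computation already displayed for the case $\mu(s)=s^p$, $p<1$, so I may simply invoke it rather than redo the algebra.

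Next I would argue that $V$ must actually reach $0$ by time $t^{*}$ and stay there. The bound above shows $V^{1-p}(t)$ cannot remain positive beyond $t^{*}$, because otherwise the right-hand side would become negative while the left-hand side is nonnegative, a contradiction; therefore $V(t_0)=0$ for some $t_0\le t^{*}$. To conclude that $V(t)\equiv 0$ for all $t\ge t^{*}$ (and in fact for all $t\ge t_0$), I use nonnegativity together with the inequality $\dot V(t)\le -\alpha V^p(t)\le 0$: once $V$ hits zero it cannot increase, and being nonnegative it cannot decrease below zero, so it is pinned at $0$ thereafter. Combining these, $V(t)\equiv 0$ whenever $t\ge t^{*}=\frac{V^{1-p}(0)}{\alpha(1-p)}$, which is exactly the assertion.

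The only genuinely delicate point is the treatment of the boundary of the positivity region, since dividing by $V^p$ is only legitimate where $V>0$. The clean way to handle this is to integrate the rearranged inequality over any subinterval on which $V$ stays strictly positive and then pass to the limit as $t$ approaches the first zero of $V$; the finiteness of the threshold $t^{*}$ guarantees this first zero occurs in finite time. I do not expect any real obstacle beyond this bookkeeping, because the sign conditions ($\alpha>0$, $0<p<1$, $V\ge 0$) make every inequality monotone in the required direction.
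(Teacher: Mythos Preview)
Your proposal is correct and follows essentially the same route as the paper: the preamble to this section solves the equality $\dot V=-\alpha V^{p}$ by separation of variables to obtain $V(t)=[\alpha(p-1)t+V^{1-p}(0)]^{1/(1-p)}$ and reads off the vanishing time, then states the lemma (citing \cite{BB00}) as a direct consequence via comparison. Your write-up is slightly more careful in handling the inequality directly and in justifying that $V$ stays at zero once it reaches it, but the underlying argument is the same.
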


Many papers investigating the finite-time stability or consensus are based on this result, see \cite{XWCG09}-\cite{LLYC15}. Moreover, the other forms like $\dot{V}(t)\le -\alpha V^p(t)\pm kV(t)$ are also proposed and proved to realize finite-time stability in \cite{SX08} and \cite{SH09}, here we omit them, interested readers can refer to these works.

However, the settling time depends on the initial value $V(0)$, which in many cases cannot be obtained. Therefore, in our recent paper, a general criteria for the fixed-time stability is proposed, which can be stated as follows.

\begin{lemma} (See \cite{LLC16})\label{t2}
If a nonnegative function $V(t)$ satisfies
\begin{eqnarray}
\dot{V}(t)\le\left\{\begin{array}{lll}
-\alpha V^{p}(t),~0<p<1           & ;&\mathrm{if}~ 0<V<1\\
-\beta V^{q}(t),~q>1      & ;&\mathrm{if}~ V\geq 1
\end{array}\right.
\end{eqnarray}
where $\alpha>0, \beta>0$. Then, $V(t)\equiv 0$, if
\begin{align}
t\ge \frac{1}{\alpha(1-p)}+\frac{1}{\beta(q-1)}.
\end{align}
\end{lemma}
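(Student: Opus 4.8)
The plan is to split the time axis into two phases according to the two regimes in the hypothesis, estimate the duration of each phase separately using the explicit integration already carried out in the preamble, and then add the two bounds. Concretely, since $V(t)$ is nonincreasing (its derivative is nonpositive wherever the bounds apply), once $V$ drops below $1$ it stays below $1$, so the evolution naturally falls into a first phase where $V\ge 1$ governed by $\dot V\le -\beta V^q$ with $q>1$, followed by a second phase where $0<V<1$ governed by $\dot V\le -\alpha V^p$ with $0<p<1$.

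First I would treat Phase~1. Here I apply the $q>1$ branch of the preamble computation to the comparison equation $\dot W=-\beta W^q$: starting from an arbitrary $W(0)=V(0)\ge 1$, the explicit solution $W(t)=[\beta(q-1)t+W^{1-p}(0)]^{-1/(q-1)}$ (with $p$ replaced by $q$) shows that $W$ decreases toward $0$ but reaches the level $1$ in finite time. The key observation is that the time needed to fall from \emph{any} value $\ge 1$ down to exactly $1$ is at most $\tfrac{1}{\beta(q-1)}$, because dropping the nonnegative initial term $V^{1-q}(0)$ only enlarges the estimate and the worst case is an infinite starting height. By the standard comparison principle for scalar differential inequalities, $V(t)\le W(t)$, so $V$ enters the region $V<1$ by time $T_1:=\tfrac{1}{\beta(q-1)}$ at the latest.

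Next I would treat Phase~2. Resetting the clock at the moment $V=1$, I apply Lemma~1 (the $0<p<1$ finite-time bound) with initial value $1$: that lemma guarantees $V\equiv 0$ once the elapsed time reaches $\tfrac{1^{\,1-p}}{\alpha(1-p)}=\tfrac{1}{\alpha(1-p)}=:T_2$. Adding the two phase durations gives the claimed settling-time bound $t\ge T_1+T_2=\tfrac{1}{\beta(q-1)}+\tfrac{1}{\alpha(1-p)}$, and monotonicity ensures $V$ remains at $0$ thereafter. The elegance here is that the total bound is \emph{independent of $V(0)$}, which is exactly the fixed-time property being asserted.

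The main obstacle is making the comparison argument fully rigorous at the interface. Two technical points need care: the function $V$ is only assumed to satisfy a \emph{differential inequality} (and may be merely absolutely continuous or satisfy the bound in the Dini/upper-Dini sense), so I must invoke a comparison lemma valid for that regularity rather than naively integrating; and I must justify that once $V$ touches $1$ it cannot return above $1$, which follows because $\dot V\le 0$ throughout but deserves an explicit sentence to rule out the boundary case $V=1$ being revisited. A subtle edge case is whether $V$ might already start below $1$, in which case Phase~1 is vacuous and the bound holds a fortiori; and whether $V(0)=1$ exactly, which is covered by either branch. Once these continuity-at-the-threshold issues are dispatched, the rest is just substitution into the two explicit bounds already derived in the preamble.
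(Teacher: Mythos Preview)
Your proposal is correct and follows exactly the two-phase argument the paper sets up in the computations preceding Lemma~1 and Lemma~2: use the explicit solution of $\dot W=-\beta W^{q}$ with $q>1$ to bound the time to reach the level $V=1$ by $\tfrac{1}{\beta(q-1)}$ (independently of $V(0)$), then apply the $0<p<1$ finite-time bound of Lemma~1 with initial value $1$ to finish in additional time $\tfrac{1}{\alpha(1-p)}$. The paper itself does not write out a formal proof (it cites \cite{LLC16}), and your added remarks on the comparison principle for inequalities and on the monotonicity at the threshold $V=1$ supply precisely the rigor that the paper's preamble sketch omits.
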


\begin{remark}
As a direct result, if a nonnegative function $V(t)$ satisfies
\begin{align}
\dot{V}(t)\le -\alpha V^{p}(t)-\beta V^{q}(t),
\end{align}
where $\alpha>0, \beta>0, 0<p<1, q>1$. Then, $V(t)\equiv 0$, if
\begin{align}
t\ge \frac{1}{\alpha(1-p)}+\frac{1}{\beta(q-1)}.
\end{align}
\end{remark}

\begin{remark}
From the above two lemmas, one can conclude that for the finite-time stability, only one term like $-V(t)^{p}, 0<p<1$ can realize this aim; however, to realize fixed-time stability, except for the term for finite-time stability, one should add an extra term $-V(t)^q, q>1$, whose role can be regarded as pulling the system into the region with norm less than $1$ in a fixed-time.
\end{remark}

\begin{remark}
Some other criteria to ensure the fixed-time stability, like
\begin{align}\label{f1}
\dot{V}(t)\le -(\alpha V^{p}(t)+\beta V^{q}(t))^{k},
\end{align}
where $\alpha>0,~\beta>0.~p>0,~q>0,~k>0,~0<pk<1,~qk>1$, which is presented in \cite{P12}; or
\begin{align}\label{f2}
\dot{V}(t)\le -\alpha V^{1-\frac{1}{2\mu}}(t)-\beta V^{1+\frac{1}{2\mu}}(t),~~\mu>1,
\end{align}
which is proposed in \cite{PPS12}; or
\begin{align}\label{f3}
\dot{V}(t)\le -\alpha V^{\frac{m}{n}}(t)-\beta V^{\frac{p}{q}}(t),
\end{align}
where $m,n,p,q$ are all positive odd integers satisfying $m>n$ and $p<q$, which is proposed in \cite{ZT14}, can all be regarded the special case of Lemma \ref{t2}. Of course, utilizing the concrete form of (\ref{f2}) and (\ref{f3}), the authors can obtain a more exact estimation of the settling time.
\end{remark}

Since the dealing with different norms happens in the fixed-time and finite-time literature, we first introduce a powerful lemma.
\begin{lemma} (See \cite{HL52})\label{norm}
For any vector $z\in R^n$ and $0<r<l$, the following norm equivalence property holds:
\begin{align}\label{pro1}
\bigg(\sum_{i=1}^n |z_i|^l\bigg)^{1/l}\le \bigg(\sum_{i=1}^n|z_i|^r\bigg)^{1/r},
\end{align}
and
\begin{align}\label{pro2}
\bigg(\frac{1}{n}\sum_{i=1}^n |z_i|^l\bigg)^{1/l}\ge \bigg(\frac{1}{n}\sum_{i=1}^n|z_i|^r\bigg)^{1/r}.
\end{align}
\end{lemma}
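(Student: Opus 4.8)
The plan is to treat the two inequalities \eqref{pro1} and \eqref{pro2} separately, since although they resemble each other they point in opposite directions and call for dual tools. Throughout I may assume $z \neq 0$, as both sides vanish when $z = 0$, and I write $p = l/r > 1$, the single exponent that drives both arguments.

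For \eqref{pro1}, the idea is to normalize by the smaller ($\ell^r$) quantity and exploit that raising a number in $[0,1]$ to a power larger than one can only decrease it. Concretely, set $S = \sum_{i=1}^n |z_i|^r > 0$ and put $t_i = |z_i|^r / S$, so that $t_i \in [0,1]$ and $\sum_i t_i = 1$. Then $|z_i|^l = (t_i S)^{p}$, hence $\sum_i |z_i|^l = S^{p}\sum_i t_i^{p}$. Since $p>1$ and $0 \le t_i \le 1$ we have $t_i^{p} \le t_i$, so $\sum_i t_i^{p} \le \sum_i t_i = 1$. Therefore $\sum_i |z_i|^l \le S^{p}$, and raising to the power $1/l$ gives $\bigl(\sum_i |z_i|^l\bigr)^{1/l} \le S^{1/r} = \bigl(\sum_i |z_i|^r\bigr)^{1/r}$, which is \eqref{pro1}.

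For \eqref{pro2}, the reversed, \emph{averaged} inequality, the key is the convexity of $\varphi(x) = x^{p}$ on $[0,\infty)$, which holds precisely because $p = l/r > 1$. Applying Jensen's inequality to the points $a_i = |z_i|^r$ with the uniform weights $1/n$ gives $\varphi\!\bigl(\tfrac1n\sum_i a_i\bigr) \le \tfrac1n\sum_i \varphi(a_i)$, which reads $\bigl(\tfrac1n\sum_i |z_i|^r\bigr)^{p} \le \tfrac1n\sum_i |z_i|^l$; taking the $1/l$ power yields \eqref{pro2}. Equivalently, one may apply H\"older's inequality to $\sum_i |z_i|^r\cdot 1$ with conjugate exponents $p$ and $p/(p-1)$, where the factor $n^{(l-r)/l}$ produced by the constant term is exactly what converts the raw sums into $1/n$-averages.

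The computations here are routine, so there is no single hard step; the point to get right is structural rather than technical. One must recognize that \eqref{pro1} and \eqref{pro2} are genuinely different statements---the first compares unnormalized sums, the second compares their $1/n$-averages---and that they are established by dual devices: the elementary estimate $t^{p}\le t$ on $[0,1]$ for the first, and Jensen/convexity (or H\"older) for the second. The only edge cases, the vanishing vector and coordinates with $z_i = 0$, are absorbed harmlessly into the conventions above.
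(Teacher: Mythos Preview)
Your proof is correct. Both arguments are the standard ones: normalizing and using $t^{p}\le t$ on $[0,1]$ for \eqref{pro1}, and convexity of $x\mapsto x^{p}$ (Jensen) for \eqref{pro2}; the edge case $z=0$ is handled appropriately.

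As for comparison: the paper does not actually prove this lemma. It states the result with a citation to Hardy--Littlewood--P\'olya \cite{HL52} and, in the remark that follows, merely points to the relevant pages (p.~4 for \eqref{pro1}, p.~26 for \eqref{pro2}). So there is no ``paper's own proof'' to compare against beyond the reference. One terminological note: the paper calls \eqref{pro1} ``the Jensen inequality'' (a traditional name for the $\ell^p$-monotonicity inequality going back to Jensen), whereas you invoke Jensen's inequality in its modern convexity form only for \eqref{pro2}; be aware of this naming clash if you cross-reference the paper's remark.
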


\begin{remark}
The first inequality (\ref{pro1}) is called the Jensen inequality, whose proof can be found in P. 4, \cite{HL52}, and this inequality is also commonly used in the finite-time literature. On the other hand, the second inequality (\ref{pro2}) is very useful to deal with the newly added term like $-V(t)^q, q>1$, whose proof can be found in P. 26, \cite{HL52}. The above two inequalities can be combined in the norm form as follows:
\begin{align}\label{pro}
\|z\|_l\le \|z\|_{r}\le n^{\frac{1}{r}-\frac{1}{l}}\|z\|_l,
\end{align}
where $\|z\|_r=(\sum_{i=1}^n|z_i|^r)^{1/r}$ and $\|z\|_l=(\sum_{i=1}^n|z_i|^l)^{1/l}$.
\end{remark}

Next, we will introduce some definitions and lemmas about the cluster synchronization in complex networks.

For a complex network of $N$ nodes, suppose its graph is  $\mathcal{G}=\{\mathcal{V},\mathcal{E}\}$,
where $\mathcal{V}$ represents the vertex set numbered by $\left\{1,\cdots,N\right\}$,
and $\mathcal{E}$ denotes the edge set with $e(i,j)\in \mathcal{E}$ if and only if there is an edge from vertex
$j$ to $i$.

\begin{definition}(See \cite{LC2011})\label{mydef1}
Matrix $A=(a_{ij})\in R^{N\times N}$ is said to belong to class $A1$, denoted as $A\in A1$, if
\begin{enumerate}
\item $a_{ij}\ge 0,i\neq j,\ a_{ii}=-\sum_{j=1,j\neq i}^Na_{ij},\ i=1,\cdots,N$
\item $A$ is irreducible.
\end{enumerate}
Furthermore, if $A\in A1$ and $a_{ij}=a_{ji},\ i\neq j$, then we say matrix $A$ belong to class $A2$, denoted as $A\in A2$.

Obviously, if the Laplacian matrix for the graph $G$ is $A$, then $A\in A1$ means that the graph is a strongly connected directed graph; and $A\in A2$ means that the graph is a strongly connected undirected graph.
\end{definition}


\begin{lemma}\label{pin}(See \cite{CLL07})
Suppose $A\in A1$ and $\epsilon<0$, then, there exists a positive definite diagonal matrix $\Theta=\mathrm{diag}(\theta_1,\cdots,\theta_N)$, such that the matrix $A_{\epsilon}=A+\mathrm{diag}\{\epsilon,0,\cdots,0\}$ is Lyapunov stable, i.e.,
\begin{align*}
\Theta A_{\epsilon}+A^T_{\epsilon}\Theta<0.
\end{align*}
Especially, when $A\in A2$ and $\epsilon<0$, we can choose $\Phi=I$, where $I$ is the identity matrix with appropriate dimensions, such that matrix $A_{\epsilon}$ is negative definite, i.e., its eigenvalues are all negative and can be sorted as
\begin{align}
0>\lambda_1(A_{\epsilon})\ge\lambda_2(A_{\epsilon})\ge\lambda_3(A_{\epsilon})
\ge\cdots\ge\lambda_N(A_{\epsilon}).
\end{align}
\end{lemma}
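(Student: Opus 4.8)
The plan is to prove the Lyapunov-stability assertion by constructing $\Theta$ explicitly from the Perron--Frobenius left eigenvector of $A$, and then to obtain the symmetric case as a specialization. Since $A\in A1$ has nonnegative off-diagonal entries, zero row sums, and is irreducible, the matrix $A+cI$ is nonnegative and irreducible for $c$ large; Perron--Frobenius then yields a strictly positive left eigenvector $\xi=(\xi_1,\ldots,\xi_N)^T$ with $\xi_i>0$ and $\xi^T A=0$ (this is simply the statement that $0$ is the dominant eigenvalue of $A$, inherited from $A\mathbf{1}=0$). I would set $\Theta=\Xi:=\mathrm{diag}(\xi_1,\ldots,\xi_N)$, which is positive definite and diagonal as required.

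First I would analyze the unperturbed symmetrized matrix $\hat{A}:=\Xi A+A^T\Xi$. A direct computation shows that $\hat A$ is symmetric, has nonnegative off-diagonal entries $\xi_i a_{ij}+\xi_j a_{ji}\ge 0$, and has zero row sums, since $\hat A\mathbf{1}=\Xi(A\mathbf{1})+A^T\xi=0$. Hence $\hat A=-L$ for a symmetric graph Laplacian $L$, so $\hat A\le 0$. Because the underlying graph is strongly connected (irreducibility), the symmetrized weight matrix is connected, so $0$ is a \emph{simple} eigenvalue of $\hat A$ with eigenvector $\mathbf{1}$; equivalently, $x^T\hat A x\le 0$ with equality iff $x\in\mathrm{span}\{\mathbf{1}\}$. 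Establishing this simplicity from irreducibility is the step that carries the real content and is where I would be most careful.

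Next I would incorporate the pinning term. Writing $E=\mathrm{diag}\{\epsilon,0,\ldots,0\}$, and using that $E$ and $\Xi$ are diagonal, I get $\Theta A_\epsilon+A_\epsilon^T\Theta=\hat A+2\Xi E=\hat A+2\xi_1\epsilon\,e_1 e_1^T$. For any $x\ne 0$ this gives $x^T(\Theta A_\epsilon+A_\epsilon^T\Theta)x=x^T\hat A x+2\xi_1\epsilon x_1^2$, a sum of two nonpositive terms (recall $\epsilon<0$ and $\xi_1>0$). These two terms vanish simultaneously only if $x\in\mathrm{span}\{\mathbf{1}\}$ and $x_1=0$, which forces $x=0$; hence the quadratic form is strictly negative for $x\ne 0$, i.e. $\Theta A_\epsilon+A_\epsilon^T\Theta<0$, proving Lyapunov stability of $A_\epsilon$.

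Finally, for $A\in A2$ the matrix $A$ is symmetric, so its positive left eigenvector is $\xi=\mathbf{1}$ and $\Xi=I$; choosing $\Theta=\Phi=I$, the identity above becomes $A_\epsilon+A_\epsilon^T=2A_\epsilon<0$, so $A_\epsilon$ itself is negative definite, and being symmetric its eigenvalues are real and may be ordered as $0>\lambda_1(A_\epsilon)\ge\cdots\ge\lambda_N(A_\epsilon)$. The main obstacle throughout is not the algebra but the spectral bookkeeping: justifying that $0$ is a simple eigenvalue of $\hat A$ (respectively of $A$) and that the single diagonal perturbation at the pinned node suffices to remove the kernel direction $\mathbf{1}$.
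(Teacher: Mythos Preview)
Your argument is correct. The paper does not give its own proof of this lemma; it is quoted from \cite{CLL07}, so there is no in-paper proof to compare against. Your construction---taking $\Theta=\Xi=\mathrm{diag}(\xi)$ with $\xi$ the Perron--Frobenius left eigenvector of $A$, showing $\Xi A+A^T\Xi$ is a negative semidefinite Laplacian-type matrix with simple kernel $\mathrm{span}\{\mathbf{1}\}$, and then using the rank-one pinning term $2\xi_1\epsilon\,e_1e_1^T$ to eliminate that kernel direction---is exactly the standard proof and is the one used in the cited reference. The specialization to $A\in A2$ with $\Theta=I$ is immediate as you wrote. The only place to be a little more explicit is the connectivity step: since $A$ is irreducible, whenever $a_{ij}>0$ one has $\xi_i a_{ij}+\xi_j a_{ji}>0$, so the undirected graph underlying $\hat A$ is connected and the simplicity of the zero eigenvalue follows; you flagged this yourself and the justification is straightforward.
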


\begin{lemma}\label{use} (See \cite{CZ2007})
Suppose $A=(a_{ij})_{N\times N}\in A2$, then for any two vectors $X=(x_1,\cdots,x_N)^T$ and $Y=(y_1,\cdots,y_N)^T$, we have
\begin{align}
X^TAY=-\sum_{j>i}a_{ij}(x_j-x_i)(y_j-y_i).
\end{align}
\end{lemma}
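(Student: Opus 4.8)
The plan is to prove the identity by expanding the bilinear form $X^TAY=\sum_{i=1}^N\sum_{j=1}^N a_{ij}x_iy_j$ directly and then exploiting the two defining properties of class $A2$ in Definition \ref{mydef1}: the zero row-sum condition $a_{ii}=-\sum_{j\neq i}a_{ij}$ and the off-diagonal symmetry $a_{ij}=a_{ji}$. The whole argument is algebraic, so no analytic machinery is needed; the content is entirely in how these two structural facts are used.

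First I would separate the diagonal from the off-diagonal part, writing $X^TAY=\sum_i a_{ii}x_iy_i+\sum_{i\neq j}a_{ij}x_iy_j$, and substitute $a_{ii}=-\sum_{j\neq i}a_{ij}$ into the first sum. This turns the diagonal contribution into $-\sum_{i\neq j}a_{ij}x_iy_i$, and combining it with the off-diagonal part collapses everything to the single sum $X^TAY=\sum_{i\neq j}a_{ij}x_i(y_j-y_i)$. At this stage only the zero row-sum property has been used.

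Next I would symmetrize over unordered pairs. Splitting $\sum_{i\neq j}$ into $\sum_{i<j}$ and $\sum_{i>j}$ and relabeling the index pair in the second half (so that $i>j$ becomes $i<j$ after swapping $i\leftrightarrow j$), the symmetry $a_{ji}=a_{ij}$ lets me merge the two halves over the pairs with $i<j$. Factoring the common coefficient then gives $a_{ij}\bigl[x_i(y_j-y_i)+x_j(y_i-y_j)\bigr]=a_{ij}(x_i-x_j)(y_j-y_i)=-a_{ij}(x_j-x_i)(y_j-y_i)$, and summing over $i<j$ produces the claimed formula. I do not expect any genuine obstacle; the only point requiring care is the index bookkeeping when folding the $i>j$ terms onto the $i<j$ terms via symmetry, and each hypothesis of $A2$ is invoked exactly once—the zero row-sum to eliminate the diagonal and the symmetry to collect over unordered pairs—so the proof is short and self-contained.
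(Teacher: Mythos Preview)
Your argument is correct and is the standard proof of this Laplacian-type identity: substitute the zero row-sum condition to eliminate the diagonal, then fold the double sum over ordered pairs into a sum over unordered pairs using the symmetry $a_{ij}=a_{ji}$. The paper does not actually supply a proof of this lemma---it merely cites \cite{CZ2007}---so there is no in-paper argument to compare against; your self-contained derivation is exactly what one would expect and fills the gap cleanly.
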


\begin{definition}(See \cite{LC2011})\label{rowsum}
Matrix $A=(a_{ij})\in R^{N_1\times N_2}$ is said to belong to class $A3$, denoted as $A\in A3$, if its each row-sum is zero, i.e.,
$\sum_{j=1}^{N_2}a_{ij}=0, i=1,\dots,N_1$.
\end{definition}

\begin{remark}
In fact, we can also assume that each row-sum is a non-zero constant, but for convenience, we assume the row-sum is zero, which means that the interactions between nodes can be cooperative (when the element is positive) and competitive (when the element is negative).
\end{remark}

In order to investigate the cluster synchronization, we assume the set of nodes in the network can be divided into $m$ clusters, i.e.,
$\{1,\cdots,N\}=\mathcal{C}_1\cup\mathcal{C}_2\cup \cdots\cup\mathcal{C}_m$, where
\begin{align}\label{clust}
\mathcal{C}_1=\{1,\cdots,r_1\}, ~\mathcal{C}_2=\{r_1+1,\cdots,r_2\},~\cdots,\nonumber\\
\mathcal{C}_k=\{r_{k-1}+1,\cdots,r_{k}\},~\cdots, ~\mathcal{C}_m=\{r_{m-1}+1,\cdots,N\}.
\end{align}

Now, using the above definitions of matrices, we can define a new type of coupling matrix $A$ for the cluster synchronization analysis.

\begin{definition}\label{matrix}
Suppose $A\in R^{N\times N}$ is \emph{symmetric}, the indexes $\{1,\cdots,N\}$ can be divided into $m$ clusters as defined in (\ref{clust}), and the following form holds
\begin{align}\label{form}
A=\left(
\begin{array}{cccc}
A_{11}&A_{12}&\cdots&A_{1m}\\
A_{21}&A_{22}&\cdots&A_{2m}\\
\vdots&\vdots&\ddots&\vdots\\
A_{m1}&A_{m2}&\cdots&A_{mm}
\end{array}
\right),
\end{align}
where $A_{ij}\in R^{(r_i-r_{i-1})\times(r_j-r_{j-1})}, r_0=0$, $A_{ii}\in A2$ and $A_{ij}\in A3$, $i,j\in \{1,\cdots,m\}$. Then the matrix $A$ is said to belong to class $A4$, denoted as $A\in A4$.
\end{definition}

\begin{remark}
In fact, the coupling matrix for cluster synchronization is also defined in \cite{LC2011}, but in that paper, the matrix $A$ can be asymmetric. The reason for the requirement in the above definition is for the convenience of later analysis. Moreover, in \cite{LLC2010}, the authors also investigate another type of cluster synchronization, i.e., $A\in A1$ and $A_{ij}\in A3$, obviously, they are different, since in this type of coupling matrix, the interactions between nodes are only cooperative ones.
\end{remark}

Cluster synchronization means that: each vertex in the same cluster has the same individual node dynamic, while nodes in different clusters are nonidentical, which can guarantee that the trajectories are apparently distinguishing when cluster synchronization is reached.
In \cite{LLC2010}, the authors use different intrinsic dynamics to guarantee the final cluster synchronization, while \cite{WZC09}-\cite{LC2011} use the pinning control technique \cite{CLL07} to realize the final cluster synchronization.
Now, for the complex network with clusters defined in (\ref{clust}), we give the following definition of fixed-time cluster synchronization.
\begin{definition}
For the network with (\ref{clust}), the \emph{fixed-time cluster synchronization} is said to be realized, if there exists a time $T$ independent on the initial values, such that for any initial values, node $i$ can converge to the target trajectory $s_k(t)$, which belongs to the $k$-th cluster $\mathcal{C}_k$, $k=1,2,\cdots,m$, i.e.,
\begin{align}
\lim\limits_{t\rightarrow T}\|x_i(t)-s_k(t)\|=0, ~\mathrm{and}~ x_i(t)=s_k(t), ~\forall t\ge T,
\end{align}
where target trajectories in different clusters are different with each other, i.e., $s_{k_1}(t)\ne s_{k_2}(t), k_1,k_2\in \{1,\cdots,m\}$.
\end{definition}

Now, we are in the position to propose the coupling protocols for the fixed-time cluster synchronization via pinning control. Without loss of generality, we assume the controllers are added on just the first node of each cluster. Therefore, we list the protocol for cluster $\mathcal{C}_k, k=1,\cdots,m$ as follows:
\begin{align}\label{M1}
&\dot{x}_{r_{k-1}+1}(t)=f(x_{r_{k-1}+1}(t))\nonumber\\
+&\alpha \sum_{j\in\mathcal{C}_k} a_{r_{k-1}+1,j}\mathrm{sig}^p\bigg(x_j(t)-x_{r_{k-1}+1}(t)\bigg)\nonumber\\
+&\sum_{k^{\prime}\ne k}\mathrm{sig}^p\bigg(\sum_{j\in\mathcal{C}_{k^{\prime}}} a_{r_{k-1}+1,j}(x_j(t)-x_{r_{k-1}+1}(t))\bigg)\nonumber\\
+&\beta\sum_{j\in\mathcal{C}_k}b_{r_{k-1}+1,j}\mathrm{sig}^q
\bigg(x_j(t)-x_{r_{k-1}+1}(t)\bigg)\nonumber\\
+&\sum_{k^{\prime}\ne k}\mathrm{sig}^q\bigg(\sum_{j\in\mathcal{C}_{k^{\prime}}} b_{r_{k-1}+1,j}(x_j(t)-x_{r_{k-1}+1}(t))\bigg)\nonumber\\
-&\epsilon_1\mathrm{sig}^p(x_{r_{k-1}+1}(t)-s_k(t))\nonumber\\
-&\epsilon_2\mathrm{sig}^q(x_{r_{k-1}+1}(t)-s_k(t)),
\end{align}
and for $i=r_{k-1}+2,\cdots,r_k$,
\begin{align}\label{M2}
\dot{x}_{i}(t)=&f(x_{i}(t))\nonumber\\
+&\alpha \sum_{j\in\mathcal{C}_k} a_{ij}\mathrm{sig}^p\bigg(x_j(t)-x_{i}(t)\bigg)\nonumber\\
+&\sum_{k^{\prime}\ne k}\mathrm{sig}^p\bigg(\sum_{j\in\mathcal{C}_{k^{\prime}}} a_{ij}(x_j(t)-x_{i}(t))\bigg)\nonumber\\
+&\beta\sum_{j\in\mathcal{C}_k}b_{ij}\mathrm{sig}^q
\bigg(x_j(t)-x_{i}(t)\bigg)\nonumber\\
+&\sum_{k^{\prime}\ne k}\mathrm{sig}^q\bigg(\sum_{j\in\mathcal{C}_{k^{\prime}}} b_{ij}(x_j(t)-x_{i}(t))\bigg),
\end{align}
where $x_i=(x_i^1,\cdots,x_i^n)^T\in R^n$, scalars $\alpha>0, \beta>0, \epsilon_1>0, \epsilon_2>0, 0<p<1, q>1$, and the target trajectory $s_k(t)$ in cluster $\mathcal{C}_k$ is governed by
\begin{align}\label{tar}
\dot{s}_k(t)=f(s_k(t)),~~~ k=1,2,\cdots,m,
\end{align}
where targets in different clusters are different, i.e., $s_{k_1}(t)\ne s_{k_2}(t), k_1\ne k_2\in \{1,\cdots,m\}$.

Function $f: R^n\rightarrow R^n$ is used to describe the intrinsic behavior of each isolated node, which satisfies the following QUAD condition:
\begin{align}\label{QU}
(x-y)^T(f(x)-f(y))\le\delta (x-y)^T(x-y),~~ \delta>0.
\end{align}

For any vector $x=(x^1,\cdots,x^n)^T$, the function $\mathrm{sig}^p(x): R^n\rightarrow R^n$ is defined as:
\begin{align}
\mathrm{sig}^p(x)=(\mathrm{sign}(x^1)|x^1|^p,\cdots,\mathrm{sign}(x^n)|x^n|^p)^T.
\end{align}
Especially, when $p=1$, $\mathrm{sig}(x)=x$.


\begin{remark}
In fact, we have made many simple assumptions. For example, as for the QUAD condition (\ref{QU}), one can assume that there exists a diagonal matrix $\Delta=\mathrm{diag}(\delta_1,\cdots,\delta_n)$ and $\delta_i>0, i=1,\cdots,n$, such that $(x-y)^T(f(x)-f(y))\le (x-y)^T\Delta(x-y)$. Moreover, as for the coupling term, one can also use a function $\Psi(x)$ instead of $x$, where $\Psi(x)=(\psi_1(x^1),\cdots,\psi_n(x^n))^T$, therefore in this case, the nonlinear function includes the linear one with $\Psi(x)=\Gamma x$, where $\Gamma$ is a positive diagonal matrix. Furthermore, the parameters $\alpha, \beta, \epsilon_1, \epsilon_2$ defined in (\ref{M1}) and (\ref{M2}) can also be different values or vectors which are functions of the cluster $k$ as: $\alpha(k), \beta(k), \epsilon_1(k), \epsilon_2(k)$.

In all, to simply the process and emphasis on the key role of terms $\mathrm{sig}^p(\cdot), 0<p<1$ and $\mathrm{sig}^q(\cdot), q>1$ for the fixed-time synchronization, we adopt the assumptions as given, interested readers can generalize these cases themselves.
\end{remark}

\begin{lemma}\label{young} (Young's inequality)
Suppose $a, b, u, v$ are all positive scalars, and $\frac{1}{u}+\frac{1}{v}=1, u>1, v>1$, then
\begin{align}
ab\le \frac{a^u}{u}+\frac{b^v}{v}
\end{align}
where ``='' holds if and only if $a^u=b^v$.
\end{lemma}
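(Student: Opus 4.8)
The plan is to prove Young's inequality via the convexity of the exponential function, which delivers both the inequality and its equality characterization in one stroke. Since $a, b > 0$, I can write $a = e^{(\ln a^u)/u}$ and $b = e^{(\ln b^v)/v}$, so that
\begin{align*}
ab = \exp\left(\frac{1}{u}\ln a^u + \frac{1}{v}\ln b^v\right).
\end{align*}
The weights $1/u$ and $1/v$ are positive and, by the hypothesis $\frac{1}{u}+\frac{1}{v}=1$, sum to one; hence the exponent is a genuine convex combination of the two real numbers $\ln a^u$ and $\ln b^v$.

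Next I would invoke the convexity of the map $t\mapsto e^{t}$. Applying the defining inequality of a convex function to the convex combination above yields
\begin{align*}
\exp\left(\frac{1}{u}\ln a^u + \frac{1}{v}\ln b^v\right) \le \frac{1}{u}e^{\ln a^u} + \frac{1}{v}e^{\ln b^v} = \frac{a^u}{u} + \frac{b^v}{v},
\end{align*}
which is exactly the desired bound $ab \le \frac{a^u}{u} + \frac{b^v}{v}$. For the equality case, I would use the \emph{strict} convexity of $e^{t}$: equality in the convexity inequality holds if and only if the two combined points coincide, i.e. $\ln a^u = \ln b^v$. Since the logarithm is injective on the positive reals, this is equivalent to $a^u = b^v$, precisely the stated equality condition.

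I do not anticipate a serious obstacle; the only point requiring care is the equality characterization, which rests on \emph{strict} convexity rather than mere convexity of the exponential. Should one prefer to avoid that distinction, an equivalent single-variable calculus route is available: fix $b$ and minimize $g(a) = \frac{a^u}{u} + \frac{b^v}{v} - ab$ over $a > 0$. The stationary condition $a^{u-1} = b$, combined with the identity $(u-1)v = u$ (which is just a rearrangement of $\frac{1}{u}+\frac{1}{v}=1$) and the sign of $g''(a) = (u-1)a^{u-2} > 0$, shows that the unique minimum value of $g$ is exactly $0$ and is attained precisely when $a^u = b^v$. Either argument is short and self-contained, so the proof is essentially a direct verification once the convex-combination structure is exposed.
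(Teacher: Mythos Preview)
Your argument is correct: writing $ab=\exp\bigl(\tfrac{1}{u}\ln a^{u}+\tfrac{1}{v}\ln b^{v}\bigr)$ and applying the (strict) convexity of $t\mapsto e^{t}$ to the convex combination with weights $1/u$ and $1/v$ gives both the inequality and the equality condition $a^{u}=b^{v}$. The alternative one-variable minimization you sketch is equally valid and yields the same conclusion.

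Note, however, that the paper does not supply its own proof of this lemma: it is simply stated as the classical Young's inequality and used as a tool later (in the estimates for $\tilde V_{3}(t)$ and $\tilde V_{5}(t)$ within the proof of Theorem~\ref{main}). So there is no ``paper's proof'' to compare against; your convexity argument is a standard and fully acceptable justification that the paper elected to omit.
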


\section{Main results}\label{PR1}
\subsection{Cluster synchronization with $m>1$}
At first, we define the synchronization error in cluster $\mathcal{C}_k, k=1,2,\cdots,m$ as:
\begin{align}
e_i(t)=x_i(t)-s_k(t),
\end{align}
where $i=r_{k-1}+1,\cdots,r_{k}$ and $s_k(t)$ is the target trajectory defined by (\ref{tar}). For $i\in \mathcal{C}_k$, define
\begin{align}
\tilde{f}(e_i)=f(x_i(t))-f(s_k(t)).
\end{align}

Therefore, the dynamics of the synchronization error in cluster $\mathcal{C}_k, k=1,\cdots,m$ is given as follows:
\begin{align}\label{E1}
&\dot{e}_{r_{k-1}+1}(t)=\tilde{f}(e_{r_{k-1}+1}(t))\nonumber\\
+&\alpha \sum_{j\in\mathcal{C}_k} a_{r_{k-1}+1,j}\mathrm{sig}^p\bigg(e_j(t)-e_{r_{k-1}+1}(t)\bigg)\nonumber\\
+&\sum_{k^{\prime}\ne k}\mathrm{sig}^p\bigg(\sum_{j\in\mathcal{C}_{k^{\prime}}} a_{r_{k-1}+1,j}e_j(t)\bigg)\nonumber\\
+&\beta\sum_{j\in\mathcal{C}_k}b_{r_{k-1}+1,j}\mathrm{sig}^q
\bigg(e_j(t)-e_{r_{k-1}+1}(t)\bigg)\nonumber\\
+&\sum_{k^{\prime}\ne k}\mathrm{sig}^q\bigg(\sum_{j\in\mathcal{C}_{k^{\prime}}} b_{r_{k-1}+1,j}e_j(t)\bigg)\nonumber\\
-&\epsilon_1\mathrm{sig}^p(e_{r_{k-1}+1}(t))
-\epsilon_2\mathrm{sig}^q(e_{r_{k-1}+1}(t)),
\end{align}
and for $i=r_{k-1}+2,\cdots,r_k$,
\begin{align}\label{E2}
&\dot{e}_{i}(t)=\tilde{f}(e_{i}(t))\nonumber\\
+&\alpha \sum_{j\in\mathcal{C}_k} a_{ij}\mathrm{sig}^p\bigg(e_j(t)-e_{i}(t)\bigg)+\sum_{k^{\prime}\ne k}\mathrm{sig}^p\bigg(\sum_{j\in\mathcal{C}_{k^{\prime}}} a_{ij}e_j(t)\bigg)\nonumber\\
+&\beta\sum_{j\in\mathcal{C}_k}b_{ij}\mathrm{sig}^q
\bigg(e_j(t)-e_{i}(t)\bigg)+\sum_{k^{\prime}\ne k}\mathrm{sig}^q\bigg(\sum_{j\in\mathcal{C}_{k^{\prime}}} b_{ij}e_j(t)\bigg),
\end{align}

Before giving the main result, we define some new matrices. Suppose the coupling matrices $A$ and $B$ in (\ref{E1}) and (\ref{E2}) satisfies that $A\in A4$ and $B\in A4$ in Definition \ref{matrix} with the same cluster partition defined in (\ref{clust}). For $k=1,\cdots,m$, we define $\overline{A}_{kk}=(\overline{a}_{ij})_{(r_k-r_{k-1})\times (r_{k}-r_{k-1})}$ as: for $i,j=1,\cdots,r_k-r_{k-1}$,
\begin{align}
\overline{a}_{ij}=\left\{
\begin{array}{ll}
a_{r_{k-1}+i,r_{k-1}+j}^{\frac{2}{1+p}}; &i\ne j\\
-\sum_{j^{\prime}=1,j^{\prime}\ne i}^{r_k-r_{k-1}}a_{r_{k-1}+i,r_{k-1}+j^{\prime}}^{\frac{2}{1+p}}; &i=j
\end{array}\right.
\end{align}
We also define $\overline{B}_{kk}=(\overline{b}_{ij})_{(r_k-r_{k-1})\times (r_{k}-r_{k-1})}$ as: for $i,j=1,\cdots,r_k-r_{k-1}$,
\begin{align}
\overline{b}_{ij}=\left\{
\begin{array}{ll}
b_{r_{k-1}+i,r_{k-1}+j}^{\frac{2}{1+q}}; &i\ne j\\
-\sum_{j^{\prime}=1,j^{\prime}\ne i}^{r_k-r_{k-1}}b_{r_{k-1}+i,r_{k-1}+j^{\prime}}^{\frac{2}{1+q}}; &i=j
\end{array}\right.
\end{align}
Obviously, $\overline{A}_{kk}\in A2$ and $\overline{B}_{kk}\in A2, k=1,\cdots,m$. Moreover, define
\begin{align}
\hat{A}_{kk}=-2\overline{A}_{kk}+\mathrm{diag}\{(2\epsilon_1\alpha^{-1})^{\frac{2}{1+p}},0,\cdots,0\},\label{pg1}\\
\hat{B}_{kk}=-2\overline{B}_{kk}+\mathrm{diag}\{(2\epsilon_2\beta^{-1})^{\frac{2}{1+q}},0,\cdots,0\}.\label{pg2}
\end{align}
Since $\overline{A}_{kk}\in A2$ and $\overline{B}_{kk}\in A2$, according to Lemma \ref{pin}, matrices $\hat{A}_{kk}$ and $\hat{B}_{kk}$ are positive definite, and we use $\lambda_{\min}(\hat{A}_{kk})$ and $\lambda_{\min}(\hat{B}_{kk})$ to denote the smallest eigenvalue of $\hat{A}_{kk}$ and $\hat{B}_{kk}$, respectively. Denote positive scalars
\begin{align}
\rho_1=\min_{k=1,\cdots,m}\lambda_{\min}(\hat{A}_{kk}),
\rho_2=\min_{k=1,\cdots,m}\lambda_{\min}(\hat{B}_{kk});\label{rho}\\
\overline{N}=n\bigg(\sum_{k=1}^m\frac{(r_k-r_{k-1})(r_k-r_{k-1}-1)}{2}+m\bigg);\nonumber\\
\overline{\alpha}=\alpha2^{\frac{p-1}{2}}\rho_1^{\frac{1+p}{2}}, ~~~\overline{\beta}=\beta \overline{N}^{\frac{1-q}{2}}2^{\frac{q-1}{2}}\rho_2^{\frac{1+q}{2}};\label{g1}\\
\overline{a}=\max_{i\in \mathcal{C}_{k}, j\in\mathcal{C}_{k^{\prime}},k\ne k^{\prime}}|a_{ij}|,~~~ \overline{b}=\max_{i\in \mathcal{C}_{k}, j\in\mathcal{C}_{k^{\prime}},k\ne k^{\prime}}|b_{ij}|;\nonumber\\
\overline{r}=\max_{k=1,\cdots,m}[N-(r_k-r_{k-1})];\nonumber\\
\gamma_1=\overline{a}^p\overline{r}(Nn)^{\frac{1-p}{2}}2^{\frac{1+p}{2}},
~~~\gamma_2=\overline{b}^q\overline{r}2^{\frac{1+q}{2}}.\label{g2}
\end{align}

Now, with the above notations, we will prove the following main theorem.
\begin{theorem}\label{main}
For coupled systems (\ref{M1}) and (\ref{M2}), if $A\in A4$, $\overline{\alpha}-\gamma_1-2\delta>0$ and $\overline{\beta}-\gamma_2-2\delta>0$, then the fixed-time cluster synchronization can be achieved with the settling time defined as:
\begin{align}\label{settling}
T_{max}=\frac{2}{(\overline{\alpha}-\gamma_1-2\delta)(1-p)}+\frac{2}{(\overline{\beta}-\gamma_2-2\delta)(q-1)},
\end{align}
where $\delta$ is as defined in (\ref{QU}), and positive parameters $\overline{\alpha}, \overline{\beta}, \gamma_1, \gamma_2$ are defined in (\ref{g1}) and (\ref{g2}).
\end{theorem}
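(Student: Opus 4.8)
The plan is to exhibit a single quadratic Lyapunov function $V(t)=\frac{1}{2}\sum_{i=1}^{N}e_i(t)^{T}e_i(t)$ and to show that along the error dynamics (\ref{E1})--(\ref{E2}) it obeys a differential inequality of exactly the form covered by Lemma \ref{t2}, but with the shifted exponents $\frac{1+p}{2}\in(0,1)$ and $\frac{1+q}{2}>1$. Differentiating, $\dot V=\sum_i e_i^{T}\dot e_i$ splits into six groups: the intrinsic term, the intra-cluster $\mathrm{sig}^p$ and $\mathrm{sig}^q$ couplings, the inter-cluster $\mathrm{sig}^p$ and $\mathrm{sig}^q$ couplings, and the pinning terms. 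The intrinsic term is immediately controlled by the QUAD condition (\ref{QU}), giving $\sum_i e_i^{T}\tilde f(e_i)\le 2\delta V$; since the exponent of $V$ here is $1$, lying strictly between $\frac{1+p}{2}$ and $\frac{1+q}{2}$, I would not try to merge it with a single power but carry it through a two-regime argument at the end.

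The heart of the proof is converting the intra-cluster couplings into powers of $V$. Working component-wise and using the symmetry $a_{ij}=a_{ji}$ inside each $A_{kk}\in A2$ together with the antisymmetry of $\mathrm{sig}^p(\cdot)$, the diagonal coupling collapses to a sum over unordered edges, yielding $-\alpha\sum_{k}\sum_{i<j\in\mathcal{C}_k}a_{ij}\sum_{l}|e_j^{l}-e_i^{l}|^{1+p}$, while the $\mathrm{sig}^p$ pinning term contributes $-\epsilon_1\sum_l|e_{r_{k-1}+1}^{l}|^{1+p}$. I would then write each summand as $a_{ij}|e_j^l-e_i^l|^{1+p}=(a_{ij}^{2/(1+p)}|e_j^l-e_i^l|^{2})^{(1+p)/2}$, likewise for the pinning summand, and apply the subadditivity half of Lemma \ref{norm} (inequality (\ref{pro1}), exponent $\frac{1+p}{2}<1$) across all edge- and pinning-terms of a cluster to pull the sum inside the power. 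By Lemma \ref{use} the resulting inner quadratic form is precisely the one generated by $\hat A_{kk}$ in (\ref{pg1}), whose extra diagonal entry $(2\epsilon_1\alpha^{-1})^{2/(1+p)}$ encodes the pinning contribution; bounding it below by $\lambda_{\min}(\hat A_{kk})\sum_{i\in\mathcal{C}_k}\|e_i\|^{2}$ and summing over $k$ produces a clean $-\overline\alpha V^{(1+p)/2}$ after collecting $\overline\alpha=\alpha 2^{(p-1)/2}\rho_1^{(1+p)/2}$. The $\mathrm{sig}^q$ terms are treated identically, except the exponent $\frac{1+q}{2}>1$ forces the power-mean inequality (\ref{pro2}) instead of (\ref{pro1}), which is exactly where the combinatorial count $\overline N$ (the total number of intra-cluster edge- and pinning-terms, times $n$) enters and yields $-\overline\beta V^{(1+q)/2}$ with $\hat B_{kk}$ playing the role of $\hat A_{kk}$.

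The inter-cluster couplings are the perturbations to be dominated. Having already used the row-sum-zero property $A_{kk'}\in A3$ to express these blocks purely through the errors $e_j$ (as in (\ref{E1})--(\ref{E2})), I bound them crudely using $|a_{ij}|\le\overline a$, $|b_{ij}|\le\overline b$, the inter-cluster degree bound $\overline r$, and Young's inequality (Lemma \ref{young}) to balance the mixed products $|e_i^{l}|\cdot|\sum_{j}a_{ij}e_j^{l}|^{p}$, converting everything through the norm-equivalence (\ref{pro}) into the same powers $V^{(1+p)/2}$ and $V^{(1+q)/2}$ with coefficients $\gamma_1,\gamma_2$ as in (\ref{g2}). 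Assembling all six groups gives $\dot V\le 2\delta V-(\overline\alpha-\gamma_1)V^{(1+p)/2}-(\overline\beta-\gamma_2)V^{(1+q)/2}$. The final step is the two-regime estimate: when $V\ge 1$ one has $V\le V^{(1+q)/2}$ and discards the (nonpositive, since $\overline\alpha-\gamma_1>2\delta>0$) $\frac{1+p}{2}$-term to obtain $\dot V\le-(\overline\beta-\gamma_2-2\delta)V^{(1+q)/2}$, while for $0<V<1$ one has $V\le V^{(1+p)/2}$ and symmetrically gets $\dot V\le-(\overline\alpha-\gamma_1-2\delta)V^{(1+p)/2}$; Lemma \ref{t2} with exponents $\frac{1+p}{2},\frac{1+q}{2}$ then delivers exactly the settling time (\ref{settling}).

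I expect the main obstacle to be the bookkeeping in the inter-cluster estimate: unlike the intra-cluster terms, these do not assemble into a positive-definite matrix, so the many norm conversions (from the $\ell^{1+p}$ and $\ell^{1+q}$ gauges back to the $\ell^2$-based $V$, across both the node index and the $n$ coordinates) must be tracked carefully through Young's and the norm-equivalence inequalities so as to land on precisely $V^{(1+p)/2}$ and $V^{(1+q)/2}$ rather than some inseparable mixed power. A secondary delicate point is applying the subadditivity and power-mean steps with the exact term count, so that the constants $\overline\alpha,\overline\beta$ (in particular the factor $\overline N^{(1-q)/2}$) come out matching the definitions in (\ref{g1}).
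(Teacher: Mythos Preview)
Your proposal is correct and follows essentially the same route as the paper: the same quadratic Lyapunov function, the same grouping of terms (intra-cluster $\mathrm{sig}^p$ plus $\mathrm{sig}^p$-pinning assembled into the $\hat A_{kk}$ quadratic form via (\ref{pro1}) and Lemma~\ref{use}; intra-cluster $\mathrm{sig}^q$ plus $\mathrm{sig}^q$-pinning into $\hat B_{kk}$ via (\ref{pro2}), producing the $\overline N^{(1-q)/2}$ factor), the same Young-inequality treatment of the inter-cluster blocks yielding $\gamma_1 V^{(1+p)/2}$ and $\gamma_2 V^{(1+q)/2}$, and the same final two-regime splitting $V\lessgtr 1$ before invoking Lemma~\ref{t2}. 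The only cosmetic difference is that the paper labels seven pieces $\tilde V_1,\dots,\tilde V_7$ before recombining $\tilde V_2+\tilde V_6$ and $\tilde V_4+\tilde V_7$, whereas you merge the pinning into the intra-cluster terms from the outset.
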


\begin{proof}
Define the Lyapunov function as
\begin{align}\label{lya}
V(t)=\frac{1}{2}\sum_{k=1}^m\sum_{i=r_{k-1}+1}^{r_k}e_i(t)^Te_i(t)
=\frac{1}{2}\sum_{k=1}^mV_k(t),
\end{align}
where $V_k(t)=\sum_{i=r_{k-1}+1}^{r_k}e_i(t)^Te_i(t)=E_k(t)^TE_k(t)$, and $E_k(t)=(e_{r_{k-1}+1}(t)^T, \cdots, e_{r_k}(t)^T)^T, k=1,\cdots,m$.

Differentiating it, we have
\begin{align*}
&\dot{V}(t)\\
=&\sum_{k=1}^m\sum_{i=r_{k-1}+1}^{r_k}e_i(t)^T\dot{e}_i(t)
=\sum_{k=1}^m\sum_{i=r_{k-1}+1}^{r_k}e_i(t)^T\bigg[\tilde{f}(e_{i}(t))\\
+&\alpha \sum_{j\in\mathcal{C}_k} a_{ij}\mathrm{sig}^p\bigg(e_j(t)-e_{i}(t)\bigg)
+\sum_{k^{\prime}\ne k}\mathrm{sig}^p\bigg(\sum_{j\in\mathcal{C}_{k^{\prime}}} a_{ij}e_j\bigg)\\
+&\beta\sum_{j\in\mathcal{C}_k}b_{ij}\mathrm{sig}^q
\bigg(e_j(t)-e_{i}(t)\bigg)+\sum_{k^{\prime}\ne k}\mathrm{sig}^q\bigg(\sum_{j\in\mathcal{C}_{k^{\prime}}} b_{ij}e_j\bigg)\bigg]\\
+&\sum_{k=1}^me_{r_{k-1}+1}^T\bigg[-\epsilon_1\mathrm{sig}^p(e_{r_{k-1}+1}(t))
-\epsilon_2\mathrm{sig}^q(e_{r_{k-1}+1}(t))\bigg]\\
=&\sum_{k=1}^m\sum_{i=r_{k-1}+1}^{r_k}e_i(t)^T\tilde{f}(e_{i}(t))\\
+&\alpha\sum_{k=1}^m\sum_{i=r_{k-1}+1}^{r_k}e_i(t)^T\sum_{j\in\mathcal{C}_k} a_{ij}\mathrm{sig}^p\bigg(e_j(t)-e_{i}(t)\bigg)\\
+&\sum_{k=1}^m\sum_{i=r_{k-1}+1}^{r_k}e_i(t)^T\sum_{k^{\prime}\ne k}\mathrm{sig}^p\bigg(\sum_{j\in\mathcal{C}_{k^{\prime}}} a_{ij}e_j(t)\bigg)\\
+&\beta\sum_{k=1}^m\sum_{i=r_{k-1}+1}^{r_k}e_i(t)^T\sum_{j\in\mathcal{C}_k}b_{ij}\mathrm{sig}^q
\bigg(e_j(t)-e_{i}(t)\bigg)\\
+&\sum_{k=1}^m\sum_{i=r_{k-1}+1}^{r_k}e_i(t)^T\sum_{k^{\prime}\ne k}\mathrm{sig}^q\bigg(\sum_{j\in\mathcal{C}_{k^{\prime}}} b_{ij}e_j(t)\bigg)\\
-&\epsilon_1\sum_{k=1}^me_{r_{k-1}+1}(t)^T\mathrm{sig}^p(e_{r_{k-1}+1}(t))\\
-&\epsilon_2\sum_{k=1}^me_{r_{k-1}+1}(t)^T\mathrm{sig}^q(e_{r_{k-1}+1}(t))\\
=&\tilde{V}_1(t)+\tilde{V}_2(t)+\tilde{V}_3(t)+\tilde{V}_4(t)+\tilde{V}_5(t)+\tilde{V}_6(t)+\tilde{V}_7(t).
\end{align*}

According to (\ref{QU}), one can get
\begin{align}\label{v1}
\tilde{V}_1(t)\le \delta\sum_{k=1}^m\sum_{i=r_{k-1}+1}^{r_k}e_i(t)^Te_i(t)=2\delta V(t).
\end{align}

Using (\ref{pro1}) in Lemma \ref{norm}, let $r=(1+p)/2\in (0,1)$, $l=1$ and $r<l$, and combining with Lemma \ref{use}, one can get that
\begin{align}\label{v2v6}
&\tilde{V}_2(t)+\tilde{V}_6(t)\nonumber\\
=&\frac{\alpha}{2}\sum_{k=1}^m\sum_{i,j\in\mathcal{C}_k}a_{ij}(e_i(t)-e_j(t))^T
\mathrm{sig}^{p}\bigg(e_j(t)-e_i(t)\bigg)\nonumber\\
&-\epsilon_1\sum_{k=1}^me_{r_{k-1}+1}(t)^T\mathrm{sig}^p(e_{r_{k-1}+1}(t))\nonumber\\
=&-\frac{\alpha}{2}\sum_{l=1}^n
\sum_{k=1}^m\sum_{i,j\in\mathcal{C}_k}a_{ij}|e_j^l(t)-e_i^l(t)|^{1+p}\nonumber\\
&-\epsilon_1\sum_{l=1}^n\sum_{k=1}^m|e_{r_{k-1}+1}^l(t)|^{1+p}\nonumber\\
\le&-\frac{\alpha}{2}\bigg[\sum_{l=1}^n\sum_{k=1}^m\bigg(\sum_{i,j\in\mathcal{C}_k}a_{ij}^{\frac{2}{1+p}}|e_j^l(t)-e_i^l(t)|^2\nonumber\\
&~~~~~~~~~~~~~~+(2\epsilon_1\alpha^{-1})^{\frac{2}{1+p}}e_{r_{k-1}+1}^l(t)^2\bigg)\bigg]^{\frac{1+p}{2}}\nonumber\\
=&-\frac{\alpha}{2}\bigg[\sum_{k=1}^m\bigg(E_k(t)^T[(-2\overline{A}_{kk})\otimes I]E_k(t)\nonumber\\
&+E_k(t)^T[\mathrm{diag}\{(2\epsilon_1\alpha^{-1})^{\frac{2}{1+p}},0,\cdots,0\}\otimes I]E_k(t)\bigg)\bigg]^{\frac{1+p}{2}}\nonumber\\
=&-\frac{\alpha}{2}\bigg[\sum_{k=1}^mE_k(t)^T[\hat{A}_{kk}\otimes I]E_k(t)\bigg]^{\frac{1+p}{2}}\nonumber\\
\le&-\frac{\alpha}{2}\bigg[\sum_{k=1}^m\lambda_{\min}(\hat{A}_{kk})E_k(t)^TE_k(t)\bigg]^{\frac{1+p}{2}}\nonumber\\
\le&-\frac{\alpha}{2}\bigg[\rho_1\sum_{k=1}^mE_k(t)^TE_k(t)\bigg]^{\frac{1+p}{2}}\nonumber\\
=&-\frac{\alpha}{2}\bigg[2\rho_1V(t)\bigg]^{\frac{1+p}{2}}
=-\overline{\alpha}V(t)^{\frac{1+p}{2}}.
\end{align}

Similarly, using (\ref{pro2}) in Lemma \ref{norm}, let $l=(1+q)/2>1=r$, and combining with Lemma \ref{use}, one can get that
\begin{align}\label{v4v7}
&\tilde{V}_4(t)+\tilde{V}_7(t)\nonumber\\
=&-\frac{\beta}{2}\sum_{l=1}^n
\sum_{k=1}^m\sum_{i,j\in\mathcal{C}_k}b_{ij}|e_j^l(t)-e_i^l(t)|^{1+q}\nonumber\\
&-\epsilon_2\sum_{l=1}^n\sum_{k=1}^m|e_{r_{k-1}+1}^l(t)|^{1+q}\nonumber\\
\le&-\frac{\beta}{2}\overline{N}^{\frac{1-q}{2}}\bigg[\sum_{l=1}^n\sum_{k=1}^m\bigg(\sum_{i,j\in\mathcal{C}_k}b_{ij}^{\frac{2}{1+q}}|e_j^l(t)-e_i^l(t)|^2\nonumber\\
&~~~~~~~~~~~~~~+(2\epsilon_2\beta^{-1})^{\frac{2}{1+q}}e_{r_{k-1}+1}^l(t)^2\bigg)\bigg]^{\frac{1+q}{2}}\nonumber\\
=&-\frac{\beta}{2}\overline{N}^{\frac{1-q}{2}}\bigg[\sum_{k=1}^mE_k(t)^T[\hat{B}_{kk}\otimes I]E_k(t)\bigg]^{\frac{1+q}{2}}\nonumber\\
\le&-\frac{\beta}{2}\overline{N}^{\frac{1-q}{2}}\bigg[2\rho_2V(t)\bigg]^{\frac{1+q}{2}}
=-\overline{\beta}V(t)^{\frac{1+q}{2}}.
\end{align}

According to the Young's inequality in Lemma \ref{young}, we have
\begin{align}\label{v3}
&\tilde{V}_3(t)\nonumber\\
=&\sum_{k=1}^m\sum_{k^{\prime}\ne k}\sum_{i=r_{k-1}+1}^{r_k}e_i(t)^T\mathrm{sig}^p\bigg(\sum_{j\in\mathcal{C}_{k^{\prime}}} a_{ij}e_j(t)\bigg)\nonumber\\
\le&\sum_{l=1}^n\sum_{k=1}^m\sum_{k^{\prime}\ne k}\sum_{i=r_{k-1}+1}^{r_k}|e_i^l(t)||\sum_{j\in\mathcal{C}_{k^{\prime}}} a_{ij}e_j^l(t)|^p\nonumber\\
\le&\overline{a}^p\sum_{l=1}^n\sum_{k=1}^m\sum_{k^{\prime}\ne k}\sum_{i\in\mathcal{C}_k}\sum_{j\in \mathcal{C}_{k^{\prime}}}\bigg(\frac{|e_i^l(t)|^{1+p}}{1+p}
+\frac{p|e_j^l(t)|^{1+p}}{1+p}\bigg)\nonumber\\
=&\overline{a}^p\sum_{l=1}^n\sum_{k=1}^m\sum_{i\in\mathcal{C}_k}|e_i^l(t)|^{1+p}[N-(r_k-r_{k-1})]\nonumber\\
\le&\overline{a}^p\overline{r}\sum_{k=1}^m\sum_{i\in\mathcal{C}_k}\sum_{l=1}^n\bigg(|e_i^l(t)|^2\bigg)^{\frac{1+p}{2}}\nonumber\\
\le&\overline{a}^p\overline{r}(Nn)^{\frac{1-p}{2}}\bigg[\sum_{k=1}^m\sum_{i\in\mathcal{C}_k}\sum_{l=1}^n|e_i^l(t)|^2\bigg]^{\frac{1+p}{2}}
=\gamma_1V(t)^{\frac{1+p}{2}}.
\end{align}

Similarly, 
\begin{align}\label{v5}
&\tilde{V}_5(t)\nonumber\\
=&\sum_{k=1}^m\sum_{k^{\prime}\ne k}\sum_{i=r_{k-1}+1}^{r_k}e_i(t)^T\mathrm{sig}^q\bigg(\sum_{j\in\mathcal{C}_{k^{\prime}}} b_{ij}e_j(t)\bigg)\nonumber\\
\le&\sum_{l=1}^n\sum_{k=1}^m\sum_{k^{\prime}\ne k}\sum_{i=r_{k-1}+1}^{r_k}|e_i^l(t)||\sum_{j\in\mathcal{C}_{k^{\prime}}} b_{ij}e_j^l(t)|^q\nonumber\\
\le&\overline{b}^q\sum_{l=1}^n\sum_{k=1}^m\sum_{k^{\prime}\ne k}\sum_{i\in\mathcal{C}_k}\sum_{j\in \mathcal{C}_{k^{\prime}}}\bigg(\frac{|e_i^l(t)|^{1+q}}{1+q}
+\frac{q|e_j^l(t)|^{1+q}}{1+q}\bigg)\nonumber\\
\le&\overline{b}^q\overline{r}\sum_{k=1}^m\sum_{i\in\mathcal{C}_k}\sum_{l=1}^n\bigg(|e_i^l(t)|^2\bigg)^{\frac{1+q}{2}}\nonumber\\
\le&\overline{b}^q\overline{r}\bigg[\sum_{k=1}^m\sum_{i\in\mathcal{C}_k}\sum_{l=1}^n|e_i^l(t)|^2\bigg]^{\frac{1+q}{2}}
=\gamma_2V(t)^{\frac{1+q}{2}}.
\end{align}

Therefore, from (\ref{v1})-(\ref{v5}), we have
\begin{align}
\dot{V}(t)
&\le 2\delta V(t)-(\overline{\alpha}-\gamma_1)V(t)^{\frac{1+p}{2}}-(\overline{\beta}-\gamma_2) V(t)^{\frac{1+q}{2}}\nonumber\\
&\le \left\{\begin{array}{ll}
-(\overline{\alpha}-\gamma_1-2\delta)V(t)^{\frac{1+p}{2}};&~~V(t)<1\\
-(\overline{\beta}-\gamma_2-2\delta) V(t)^{\frac{1+q}{2}};&~~V(t)\ge 1
\end{array}\right.
\end{align}
According to Lemma \ref{t2}, one can get that the fixed-time synchronization is finally realized, and the settling time can be also obtained as (\ref{settling}). The proof is completed.
\end{proof}

\begin{remark} (Cluster consensus)
Obviously, when $f(\cdot)=0$, then the above fixed-time cluster synchronization problem becomes the \emph{fixed-time cluster consensus problem}. In this case, only if $\overline{\alpha}-\gamma_1>0$ and $\overline{\beta}-\gamma_2>0$, then the fixed-time cluster consensus can be achieved with the settling time defined as:
\begin{align}\label{settling2}
T_{max}=\frac{2}{(\overline{\alpha}-\gamma_1)(1-p)}+\frac{2}{(\overline{\beta}-\gamma_2)(q-1)}.
\end{align}
\end{remark}

\begin{remark}
Let us analysis the role of parameters for fixed-time cluster synchronization. For the network model (\ref{E1}) and (\ref{E2}), since the network topology is constant with time, therefore, parameters $\gamma_1$ and $\gamma_2$ in (\ref{g2}) and $\delta$ in (\ref{QU}) are constants. If we let $\epsilon_1=\alpha\omega_1$ and $\epsilon_2=\alpha\omega_2$, then matrices $\hat{A}_{kk}$ and $\hat{B}_{kk}$ in (\ref{pg1}) and (\ref{pg2}) are constant matrices independent of the parameters $\alpha$ and $\beta$, so $\rho_1$ and $\rho_2$ in (\ref{rho}) are constants; therefore, according to condition (\ref{settling}), one can get that the larger of parameters $\alpha$ and $\beta$, the network can realize the fixed-time cluster synchronization more quickly.
\end{remark}

\begin{remark}
Expect for the protocol (\ref{M1}) and (\ref{M2}), another two possible ways to realize fixed-time cluster synchronization are replacing the terms
\begin{align*}
\sum_{k^{\prime}\ne k}\mathrm{sig}^p\bigg(\sum_{j\in\mathcal{C}_{k^{\prime}}} a_{ij}(x_j(t)-x_{i}(t))\bigg)
\end{align*}
and
\begin{align*}
\sum_{k^{\prime}\ne k}\mathrm{sig}^q\bigg(\sum_{j\in\mathcal{C}_{k^{\prime}}} b_{ij}(x_j(t)-x_{i}(t))\bigg)
\end{align*}
in (\ref{M1}) and (\ref{M2}) by terms
\begin{align*}
\mathrm{sig}^p\bigg(\sum_{k^{\prime}\ne k}\sum_{j\in\mathcal{C}_{k^{\prime}}} a_{ij}(x_j(t)-x_{i}(t))\bigg)
\end{align*}
and
\begin{align*}
\mathrm{sig}^q\bigg(\sum_{k^{\prime}\ne k}\sum_{j\in\mathcal{C}_{k^{\prime}}} b_{ij}(x_j(t)-x_{i}(t))\bigg),
\end{align*}
or by linearly coupling terms as
\begin{align*}
\sum_{k^{\prime}\ne k}\sum_{j\in\mathcal{C}_{k^{\prime}}} a_{ij}(x_j(t)-x_{i}(t))
\end{align*}
and
\begin{align*}
\sum_{k^{\prime}\ne k}\sum_{j\in\mathcal{C}_{k^{\prime}}} b_{ij}(x_j(t)-x_{i}(t)).
\end{align*}
Obviously, the combinations of these protocols are also feasible. Since there are no great differences in proving their validity for fixed-time cluster synchronization (for the linear coupling, there have been many results), here we just present these protocols, interested readers can complete the proofs.
\end{remark}

\subsection{Complete synchronization with $m=1$}
When $m=1$, i.e., all the nodes lie in the same cluster, then the cluster synchronization problem becomes the complete synchronization with pinning control. In this case, the network with $N$ strongly connected nodes can be described as:
\begin{align}\label{C1}
\dot{x}_{1}(t)=&f(x_{1}(t))+\alpha \sum_{j=1}^N a_{1j}\mathrm{sig}^p\bigg(x_j(t)-x_{1}(t)\bigg)\nonumber\\
&~~~~~~~~~~+\beta\sum_{j=1}^Nb_{1j}\mathrm{sig}^q
\bigg(x_j(t)-x_{1}(t)\bigg)\nonumber\\
&-\epsilon_1\mathrm{sig}^p(x_{1}(t)-s(t))-\epsilon_2\mathrm{sig}^q(x_{1}(t)-s(t)),
\end{align}
and for $i=2,\cdots,N$,
\begin{align}\label{C2}
\dot{x}_{i}(t)=&f(x_{i}(t))+\alpha \sum_{j=1}^N a_{ij}\mathrm{sig}^p\bigg(x_j(t)-x_{i}(t)\bigg)\nonumber\\
&~~~~~~~~~~+\beta\sum_{j=1}^Nb_{ij}\mathrm{sig}^q
\bigg(x_j(t)-x_{i}(t)\bigg),
\end{align}
where $x_i\in R^n$, $\alpha>0, \beta>0$, $0<p<1$, $q>1$, and the target trajectory $s(t)$ satisfies that:
\begin{align}\label{syn-tra}
\dot{s}(t)=f(s(t)).
\end{align}

Similarly, we define $\overline{A}=(\overline{a}_{ij})_{N\times N}$ as: for $i,j=1,\cdots,N$,
\begin{align}
\overline{a}_{ij}=\left\{
\begin{array}{ll}
a_{ij}^{\frac{2}{1+p}}; &i\ne j\\
-\sum_{k=1,k\ne i}^{N}a_{ik}^{\frac{2}{1+p}}; &i=j
\end{array}\right.
\end{align}
We also define $\overline{B}=(\overline{b}_{ij})_{N\times N}$ as: for $i,j=1,\cdots,N$,
\begin{align}
\overline{b}_{ij}=\left\{
\begin{array}{ll}
b_{ij}^{\frac{2}{1+q}}; &i\ne j\\
-\sum_{k=1,k\ne i}^{N}b_{ik}^{\frac{2}{1+q}}; &i=j
\end{array}\right.
\end{align}
Moreover, define
\begin{align}
\hat{A}=-2\overline{A}+\mathrm{diag}\{(2\epsilon_1\alpha^{-1})^{\frac{2}{1+p}},0,\cdots,0\},\label{pg3}\\
\hat{B}=-2\overline{B}+\mathrm{diag}\{(2\epsilon_2\beta^{-1})^{\frac{2}{1+q}},0,\cdots,0\}.\label{pg4}
\end{align}
Since matrices $\hat{A}$ and $\hat{B}$ are positive definite, we can use $\lambda_{\min}(\hat{A})$ and $\lambda_{\min}(\hat{B})$ to denote the smallest eigenvalue of $\hat{A}$ and $\hat{B}$, respectively. Let $\overline{N}=n(\frac{N(N-1)}{2}+1)$, and
\begin{align}
\overline{\alpha}=\alpha2^{\frac{p-1}{2}}(\lambda_{\min}(\hat{A}))^{\frac{1+p}{2}}, \overline{\beta}=\beta \overline{N}^{\frac{1-q}{2}}2^{\frac{q-1}{2}}(\lambda_{\min}(\hat{B}))^{\frac{1+q}{2}}.\label{g3}
\end{align}

\begin{theorem}\label{comp}
For coupled systems (\ref{C1}) and (\ref{C2}), if $A\in A2$, $\overline{\alpha}-2\delta>0$ and $\overline{\beta}-2\delta>0$, then the fixed-time complete synchronization can be achieved with the settling time defined as:
\begin{align}\label{settling3}
T_{max}=\frac{2}{(\overline{\alpha}-2\delta)(1-p)}+\frac{2}{(\overline{\beta}-2\delta)(q-1)},
\end{align}
where $\delta$ is as defined in (\ref{QU}), and positive parameters $\overline{\alpha}, \overline{\beta}$ are defined in (\ref{g3}).
\end{theorem}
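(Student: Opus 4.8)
The plan is to treat this statement as the $m=1$ specialization of Theorem \ref{main} and to replay that argument with all inter-cluster terms deleted. First I would introduce the synchronization errors $e_i(t)=x_i(t)-s(t)$ together with $\tilde{f}(e_i)=f(x_i(t))-f(s(t))$, subtract (\ref{syn-tra}) from (\ref{C1})--(\ref{C2}) to obtain the error system, and take the candidate Lyapunov function $V(t)=\tfrac{1}{2}\sum_{i=1}^N e_i(t)^Te_i(t)$. Differentiating along trajectories splits $\dot{V}(t)$ into the intrinsic-dynamics term, the two intra-cluster coupling sums (the $\mathrm{sig}^p$ and $\mathrm{sig}^q$ parts), and the two pinning terms attached to node $1$.

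Next I would bound each piece exactly as in the proof of Theorem \ref{main}. The intrinsic term is controlled by the QUAD condition (\ref{QU}) to give $\le 2\delta V(t)$. For the $\mathrm{sig}^p$ coupling I would symmetrize via Lemma \ref{use}, collect the diffusive differences together with the single pinning term $-\epsilon_1\, e_1(t)^T\mathrm{sig}^p(e_1(t))$, and then apply the Jensen-type inequality (\ref{pro1}) with $r=(1+p)/2<1=l$. This packages the bracketed quadratic form as $\sum_{\ell} E(t)^T[\hat{A}\otimes I]E(t)$ with $\hat{A}$ as in (\ref{pg3}); since $\hat{A}$ is positive definite I may lower-bound this form by $\lambda_{\min}(\hat{A})\,E(t)^TE(t)=2\lambda_{\min}(\hat{A})V(t)$, which yields $\le -\overline{\alpha}\,V(t)^{(1+p)/2}$ with $\overline{\alpha}$ as in (\ref{g3}). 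The $\mathrm{sig}^q$ coupling is handled symmetrically, now using the reverse inequality (\ref{pro2}) with $l=(1+q)/2>1=r$ and the dimensional factor $\overline{N}^{(1-q)/2}$, producing $\le -\overline{\beta}\,V(t)^{(1+q)/2}$ with $\hat{B}$ from (\ref{pg4}).

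The decisive simplification for $m=1$ is that the cross-cluster sums $\sum_{k^{\prime}\ne k}(\cdots)$ are empty, so the terms that in Theorem \ref{main} produced the Young's-inequality contributions $\gamma_1 V^{(1+p)/2}$ and $\gamma_2 V^{(1+q)/2}$ (via Lemma \ref{young}) are simply absent; effectively $\gamma_1=\gamma_2=0$. Collecting the surviving bounds gives $\dot{V}(t)\le 2\delta V(t)-\overline{\alpha}\,V(t)^{(1+p)/2}-\overline{\beta}\,V(t)^{(1+q)/2}$. On the region $\{V<1\}$ I absorb $2\delta V$ into the $p$-term using $V\le V^{(1+p)/2}$ and drop the nonpositive $q$-term, and on $\{V\ge 1\}$ I absorb $2\delta V$ into the $q$-term using $V\le V^{(1+q)/2}$; this puts $\dot{V}$ into the piecewise form of Lemma \ref{t2} with strictly positive coefficients $\overline{\alpha}-2\delta$ and $\overline{\beta}-2\delta$. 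Lemma \ref{t2} then delivers fixed-time convergence with the settling time (\ref{settling3}).

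I do not expect a genuine obstacle here, since the claim is essentially a corollary of Theorem \ref{main}; the steps requiring care are bookkeeping ones. The most delicate point is verifying that $\hat{A}$ and $\hat{B}$ in (\ref{pg3})--(\ref{pg4}) are positive \emph{definite} rather than merely semidefinite: because $\overline{A},\overline{B}\in A2$, the matrices $-2\overline{A}$ and $-2\overline{B}$ are positive semidefinite with null direction spanned by $\mathbf{1}$, and it is precisely the positive diagonal entry at node $1$ (contributed by the pinning controller) that destroys this null direction, so that Lemma \ref{pin} applies and guarantees positivity. The remaining care is to keep the exponents $2/(1+p)$, $2/(1+q)$ and the constant $\overline{N}=n\big(\tfrac{N(N-1)}{2}+1\big)$ consistent through the Jensen step so that the final constants agree with (\ref{g3}).
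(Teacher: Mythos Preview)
Your proposal is correct and matches the paper's approach exactly: the paper itself omits the proof entirely, stating only that ``the proof process is the same with that in Theorem \ref{main}'', and your plan is precisely to replay that argument with $m=1$ and the inter-cluster sums deleted (so $\gamma_1=\gamma_2=0$). The bookkeeping you flag---positive definiteness of $\hat{A},\hat{B}$ via Lemma \ref{pin}, and tracking the exponents and the constant $\overline{N}=n\big(\tfrac{N(N-1)}{2}+1\big)$ through the Jensen steps---is exactly what is needed and presents no obstacle.
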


Since the proof process is the same with that in Theorem \ref{main}, here we omit it.

\begin{remark}
The network topology can be easily relaxed to be detail-balanced \cite{WX10}, i.e., there exist some scalars $\omega_i>0$, such that $\omega_ia_{ij}=\omega_ja_{ji}$.
\end{remark}

\begin{corollary} (Complete consensus) For the following complex network,
\begin{align*}
&\dot{x}_{1}(t)\nonumber\\
=&\alpha \sum_{j=1}^N a_{1j}\mathrm{sig}^p\bigg(x_j-x_{1}\bigg)+\beta\sum_{j=1}^Nb_{1j}\mathrm{sig}^q
\bigg(x_j-x_{1}\bigg)\nonumber\\
&-\epsilon_1\mathrm{sig}^p(x_{1}(t)-s)-\epsilon_2\mathrm{sig}^q(x_{1}(t)-s),
\end{align*}
and for $i=2,\cdots,N$,
\begin{align*}
\dot{x}_{i}(t)=&\alpha \sum_{j=1}^N a_{ij}\mathrm{sig}^p\bigg(x_j-x_{i}\bigg)+\beta\sum_{j=1}^Nb_{ij}\mathrm{sig}^q
\bigg(x_j-x_{i}\bigg),
\end{align*}
where $x_i\in R$, $\alpha>0, \beta>0$, $0<p<1$, $q>1$, and the target trajectory $s\in R$ can be any constant scalar, the fixed-time complete consensus can be realized, i.e., $x_i(t)=s, i=1,\cdots,N$ when
\begin{align}
t\ge T_{\max}=\frac{2}{\overline{\alpha}(1-p)}+\frac{2}{\overline{\beta}(q-1)},
\end{align}
where positive parameters $\overline{\alpha}, \overline{\beta}$ are defined in (\ref{g3}).
\end{corollary}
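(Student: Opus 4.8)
The plan is to read this corollary as the $f\equiv 0$, $n=1$ specialization of Theorem~\ref{comp} and to rerun the Lyapunov computation of Theorem~\ref{main} in this degenerate regime. First I would note that a constant scalar $s$ trivially satisfies the target equation (\ref{syn-tra}), namely $\dot{s}(t)=f(s(t))=0$ when $f\equiv 0$; hence $s$ is an admissible target and the error $e_i(t)=x_i(t)-s$ obeys $\dot{e}_i(t)=\dot{x}_i(t)$. Substituting, the error system coincides with (\ref{C1})--(\ref{C2}) except that the term $\tilde{f}(e_i)=f(x_i)-f(s)$ is now identically zero. Because all states are scalar, the Kronecker factor $\otimes I$ appearing in the proof of Theorem~\ref{main} is trivial and $n=1$ throughout.

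Next I would adopt the same Lyapunov function $V(t)=\frac{1}{2}\sum_{i=1}^N e_i(t)^2$ as in (\ref{lya}) and differentiate along the error dynamics, splitting $\dot{V}$ into the intrinsic part, the $\mathrm{sig}^p$-coupling-plus-pinning part, and the $\mathrm{sig}^q$-coupling-plus-pinning part. Two simplifications occur automatically: the intrinsic contribution $\tilde{V}_1$ vanishes since $\tilde{f}\equiv 0$, so effectively $\delta=0$ in the QUAD bound (\ref{QU}); and because $m=1$ there are no cross-cluster sums $\sum_{k'\ne k}$, so the terms that generated $\gamma_1$ and $\gamma_2$ do not appear, i.e.\ $\gamma_1=\gamma_2=0$.

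For the surviving terms I would reuse the estimates (\ref{v2v6}) and (\ref{v4v7}) verbatim, with the single block $\hat{A}$ of (\ref{pg3}) replacing $\hat{A}_{kk}$ and $\hat{B}$ of (\ref{pg4}) replacing $\hat{B}_{kk}$. Combining Lemma~\ref{use} with the Jensen inequality (\ref{pro1}) (taking $r=(1+p)/2$, $l=1$) gives $\tilde{V}_2+\tilde{V}_6\le -\overline{\alpha}\,V^{(1+p)/2}$, while (\ref{pro2}) (taking $l=(1+q)/2$) gives $\tilde{V}_4+\tilde{V}_7\le -\overline{\beta}\,V^{(1+q)/2}$, with $\overline{\alpha},\overline{\beta}$ exactly as in (\ref{g3}). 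Hence
\begin{align*}
\dot{V}(t)\le -\overline{\alpha}\,V(t)^{\frac{1+p}{2}}-\overline{\beta}\,V(t)^{\frac{1+q}{2}},
\end{align*}
and the Remark following Lemma~\ref{t2} forces $V(t)\equiv 0$, i.e.\ $x_i(t)=s$ for all $i$, once $t\ge \frac{2}{\overline{\alpha}(1-p)}+\frac{2}{\overline{\beta}(q-1)}$.

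The only point deserving care---and the closest thing to an obstacle---is the status of the QUAD constant. Theorem~\ref{comp} is phrased with $\delta>0$ and its settling time contains the corrections $\overline{\alpha}-2\delta$ and $\overline{\beta}-2\delta$, whereas with $f\equiv 0$ the inequality (\ref{QU}) holds with $\delta=0$ (its left-hand side being identically zero). I would simply verify that $\delta=0$ is admissible here, so the hypotheses $\overline{\alpha}-2\delta>0$, $\overline{\beta}-2\delta>0$ reduce to $\overline{\alpha}>0$, $\overline{\beta}>0$, which hold by the positive definiteness of $\hat{A},\hat{B}$, and the $-2\delta$ corrections drop out of the settling-time formula, recovering exactly the stated $T_{\max}$.
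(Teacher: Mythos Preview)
Your proposal is correct and matches the paper's approach: the paper states this corollary without proof, treating it as the immediate $f\equiv 0$, $n=1$ specialization of Theorem~\ref{comp} (itself proved exactly as Theorem~\ref{main}), so that $\delta=0$, the cross-cluster terms $\gamma_1,\gamma_2$ vanish, and the settling time (\ref{settling3}) collapses to the stated $T_{\max}$. Your explicit handling of the $\delta=0$ boundary case is a welcome clarification of a point the paper leaves implicit.
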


When the number of network node $N=1$, then the network becomes the simplest master-slave coupled systems as:
\begin{align}\label{MS}
\dot{x}(t)=&f(x(t))-\epsilon_1\mathrm{sig}^p(x(t)-s(t))-\epsilon_2\mathrm{sig}^q(x(t)-s(t)),
\end{align}
where $x(t)\in R^n$, $0<p<1$, $q>1$, and the target trajectory $s(t)$ is defined by (\ref{syn-tra}).

\begin{theorem}\label{master-slave}
For coupled systems (\ref{MS}), if $\overline{\alpha}-2\delta>0$ and $\overline{\beta}-2\delta>0$, then the fixed-time cluster synchronization can be achieved with the settling time defined as:
\begin{align}\label{settling8}
T_{max}=\frac{2}{(\overline{\alpha}-2\delta)(1-p)}+\frac{2}{(\overline{\beta}-2\delta)(q-1)},
\end{align}
where $\delta$ is as defined in (\ref{QU}), and
\begin{align}\label{ab}
\overline{\alpha}=\epsilon_12^{\frac{1+p}{2}},  ~~~~~ \overline{\beta}=\epsilon_2n^{\frac{1-q}{2}}2^{\frac{1+q}{2}}.
\end{align}
\end{theorem}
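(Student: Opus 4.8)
The plan is to reduce the single-node case to a scalar differential inequality for a quadratic Lyapunov function and then invoke Lemma \ref{t2}, exactly as in Theorem \ref{main} but stripped of all coupling terms. First I would introduce the error $e(t)=x(t)-s(t)$; since $\dot{s}(t)=f(s(t))$, subtracting the target dynamics from (\ref{MS}) gives
\[
\dot{e}(t)=f(x(t))-f(s(t))-\epsilon_1\mathrm{sig}^p(e(t))-\epsilon_2\mathrm{sig}^q(e(t)).
\]
Taking $V(t)=\frac12 e(t)^Te(t)$ as in (\ref{lya}) and differentiating, $\dot{V}(t)=e(t)^T[f(x)-f(s)]-\epsilon_1 e(t)^T\mathrm{sig}^p(e(t))-\epsilon_2 e(t)^T\mathrm{sig}^q(e(t))$. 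The QUAD condition (\ref{QU}) immediately bounds the first term by $\delta\, e^Te=2\delta V(t)$, which is the role $\tilde V_1$ plays in Theorem \ref{main}.

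The core of the estimate is the two pinning terms. Coordinate-wise, $e^T\mathrm{sig}^p(e)=\sum_{l=1}^n|e^l|^{1+p}$ and $e^T\mathrm{sig}^q(e)=\sum_{l=1}^n|e^l|^{1+q}$. Writing each summand as $(|e^l|^2)^{(1+p)/2}$ and $(|e^l|^2)^{(1+q)/2}$, I would apply the two halves of Lemma \ref{norm}: since $(1+p)/2\in(0,1)$, inequality (\ref{pro1}) (with $r=(1+p)/2$, $l=1$) gives $\sum_l|e^l|^{1+p}\ge(\sum_l|e^l|^2)^{(1+p)/2}=(2V)^{(1+p)/2}$, whereas since $(1+q)/2>1$, inequality (\ref{pro2}) (with $r=1$, $l=(1+q)/2$) gives $\sum_l|e^l|^{1+q}\ge n^{(1-q)/2}(2V)^{(1+q)/2}$. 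These reproduce precisely the constants $\overline{\alpha}=\epsilon_12^{(1+p)/2}$ and $\overline{\beta}=\epsilon_2 n^{(1-q)/2}2^{(1+q)/2}$ of (\ref{ab}), and therefore
\[
\dot{V}(t)\le 2\delta V(t)-\overline{\alpha}V(t)^{\frac{1+p}{2}}-\overline{\beta}V(t)^{\frac{1+q}{2}}.
\]

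To finish, I would split on the size of $V$. When $0<V<1$ one has $V\le V^{(1+p)/2}$ because $(1+p)/2<1$, so the drift is dominated by $-(\overline{\alpha}-2\delta)V^{(1+p)/2}$; when $V\ge1$ one has $V\le V^{(1+q)/2}$, giving $-(\overline{\beta}-2\delta)V^{(1+q)/2}$. Under the hypotheses $\overline{\alpha}-2\delta>0$ and $\overline{\beta}-2\delta>0$ this is exactly the two-branch form required by Lemma \ref{t2} with exponents $(1+p)/2\in(0,1)$ and $(1+q)/2>1$, and the settling-time formula there collapses to $\tfrac{2}{(\overline{\alpha}-2\delta)(1-p)}+\tfrac{2}{(\overline{\beta}-2\delta)(q-1)}$ upon substituting $1-\tfrac{1+p}{2}=\tfrac{1-p}{2}$ and $\tfrac{1+q}{2}-1=\tfrac{q-1}{2}$, which is (\ref{settling8}). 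I do not expect a genuine obstacle: the absence of any inter-node coupling removes the terms $\tilde V_2$–$\tilde V_7$ and the Young-inequality bookkeeping of Theorem \ref{main}, so the only points requiring care are choosing the correct direction of the two norm inequalities in Lemma \ref{norm} and the sign-absorption of the $2\delta V$ term in the two regimes $V<1$ and $V\ge1$.
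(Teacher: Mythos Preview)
Your proposal is correct and follows essentially the same approach as the paper's proof: the same quadratic Lyapunov function, the same use of the QUAD condition for the $2\delta V$ term, the same application of the two inequalities in Lemma~\ref{norm} to obtain the constants $\overline{\alpha}$ and $\overline{\beta}$, and the same case split $V<1$ versus $V\ge1$ before invoking Lemma~\ref{t2}. Your additional commentary on why each norm inequality is applied in the correct direction is accurate and matches the paper's implicit reasoning.
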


\begin{proof}
Define the Lyapunov function
\begin{align}
V(t)=\frac{1}{2}e(t)^Te(t)=\frac{1}{2}\sum_{l=1}^ne^l(t)^2,
\end{align}
where $e(t)=x(t)-s(t)=(e^1(t),\cdots,e^n(t))^T\in R^n$.

Differentiating it, we have
\begin{align*}
&\dot{V}(t)
\\=&e(t)^T[\tilde{f}(e(t))-\epsilon_1\mathrm{sig}^p(e(t))-\epsilon_2\mathrm{sig}^q(e(t))]\\
\le &2\delta V(t)-\epsilon_1|e(t)|^{1+p}-\epsilon_2|e(t)|^{1+q}\\
=&2\delta V(t)-\epsilon_1\sum_{l=1}^n(e^l(t)^2)^{\frac{1+p}{2}}-\epsilon_2\sum_{l=1}^n(e^l(t)^2)^{\frac{1+q}{2}}\\
\le&2\delta V(t)-\epsilon_1\bigg(\sum_{l=1}^ne^l(t)^2\bigg)^{\frac{1+p}{2}}-\epsilon_2n^{\frac{1-q}{2}}\bigg(\sum_{l=1}^ne^l(t)^2\bigg)^{\frac{1+q}{2}}\\
=&2\delta V(t)-\epsilon_12^{\frac{1+p}{2}}V(t)^{\frac{1+p}{2}}-\epsilon_2n^{\frac{1-q}{2}}2^{\frac{1+q}{2}}V(t)^{\frac{1+q}{2}}\\
\le &\left\{\begin{array}{ll}
-(\overline{\alpha}-2\delta)V(t)^{\frac{1+p}{2}};&~~V(t)<1\\
-(\overline{\beta}-2\delta) V(t)^{\frac{1+q}{2}};&~~V(t)\ge 1
\end{array}\right.
\end{align*}
According to Lemma \ref{t2}, one can get that the fixed-time synchronization is finally realized, and the settling time can be also obtained as (\ref{settling8}).
\end{proof}

Next, we will apply the above theorem on the fixed-time stabilization of equilibrium point for nonlinear systems, including neural networks.

Consider the nonlinear function described by:
\begin{align}\label{a1}
\dot{x}(t)=W_1x(t)+W_2\Phi(x(t))+J
\end{align}
where $x(t)\in R^n$ is the state vector, $\Phi(x): R^n\rightarrow R^n$ is the nonlinear function satisfying $\|\Phi(x)-\Phi(y)\|\le \|W_3(x-y)\|, \forall x,y\in R^n$, and $J$ is the external disturbance vector. If we need to control the system (\ref{a1}) to the desired state $x^{\star}$, where $x^{\star}$ is the equilibrium point of (\ref{a1}), $ W_1x^{\star}+W_2\Phi(x^{\star})+J=0.$

\begin{corollary}{\label{apply}}
For the nonlinear system (\ref{a1}), if there exist two scalars $\delta>0$ and $\varepsilon>0$, such that
\begin{align}\label{LMI}
\frac{1}{2}({W_1+W_1^T}+\varepsilon W_2W_2^T+\varepsilon^{-1}W_3^TW_3)\le \delta I,
\end{align}
then under the following control
\begin{align}\label{NN}
\dot{x}(t)=&W_1x(t)+W_2\Phi(x(t))+J\nonumber\\
&-\epsilon_1\mathrm{sig}^p(x(t)-x^{\star})-\epsilon_2\mathrm{sig}^q(x(t)-x^{\star}),
\end{align}
the fixed-time stability can be realized, and the settling time is defined by (\ref{settling8}).
\end{corollary}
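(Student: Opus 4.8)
The plan is to reduce Corollary~\ref{apply} to the master-slave case already handled in Theorem~\ref{master-slave}. The key observation is that system~(\ref{NN}) has exactly the structure of~(\ref{MS}): it is a single-node system with the two fixed-time pinning terms $-\epsilon_1\mathrm{sig}^p(x-x^{\star})$ and $-\epsilon_2\mathrm{sig}^q(x-x^{\star})$ driving the state toward a fixed target. The target here is the constant equilibrium $x^{\star}$ rather than a trajectory $s(t)$, but since $\dot{x}^{\star}=0$ and $W_1x^{\star}+W_2\Phi(x^{\star})+J=0$, the constant $x^{\star}$ does satisfy the analogue of~(\ref{syn-tra}) for the intrinsic dynamics $f(x)=W_1x+W_2\Phi(x)+J$. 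So the only thing I need to verify before invoking Theorem~\ref{master-slave} verbatim is that this particular $f$ satisfies the QUAD condition~(\ref{QU}) with the same $\delta$ appearing in~(\ref{settling8}).

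First I would form the error $e(t)=x(t)-x^{\star}$ and compute $(x-x^{\star})^T(f(x)-f(x^{\star}))$. The linear part contributes $(x-x^{\star})^TW_1(x-x^{\star})=\tfrac12(x-x^{\star})^T(W_1+W_1^T)(x-x^{\star})$. For the nonlinear part I would bound $(x-x^{\star})^TW_2(\Phi(x)-\Phi(x^{\star}))$ using the elementary inequality $2u^Tv\le \varepsilon u^Tu+\varepsilon^{-1}v^Tv$ (a weighted Young/completion-of-squares step) with $u=W_2^T(x-x^{\star})$ absorbed appropriately and $v=\Phi(x)-\Phi(x^{\star})$, together with the Lipschitz bound $\|\Phi(x)-\Phi(x^{\star})\|\le\|W_3(x-x^{\star})\|$. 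This yields
\begin{align*}
(x-x^{\star})^T W_2(\Phi(x)-\Phi(x^{\star}))
\le \tfrac{\varepsilon}{2}(x-x^{\star})^TW_2W_2^T(x-x^{\star})
+\tfrac{\varepsilon^{-1}}{2}(x-x^{\star})^TW_3^TW_3(x-x^{\star}).
\end{align*}
Summing the two contributions and applying hypothesis~(\ref{LMI}) gives exactly $(x-x^{\star})^T(f(x)-f(x^{\star}))\le\delta(x-x^{\star})^T(x-x^{\star})$, which is~(\ref{QU}).

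Having established the QUAD condition, the remainder is immediate: I would define the same Lyapunov function $V(t)=\tfrac12 e(t)^Te(t)$ and repeat the chain of inequalities in the proof of Theorem~\ref{master-slave}, using the QUAD bound on the $f$-term and Lemma~\ref{norm} (inequality~(\ref{pro2})) to convert the $\epsilon_2$ term into a power of $V$. This produces the same differential inequality
\begin{align*}
\dot{V}(t)\le 2\delta V(t)-\overline{\alpha}V(t)^{\frac{1+p}{2}}-\overline{\beta}V(t)^{\frac{1+q}{2}}
\end{align*}
with $\overline{\alpha},\overline{\beta}$ as in~(\ref{ab}), and Lemma~\ref{t2} then delivers fixed-time stability with settling time~(\ref{settling8}).

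**The main obstacle** is genuinely just the QUAD verification—specifically, recognizing that condition~(\ref{LMI}) is precisely engineered so that the $\varepsilon$-weighted splitting of the cross term $W_2(\Phi(x)-\Phi(x^{\star}))$ closes exactly against the Lipschitz constant $W_3$. One must be careful that the free parameter $\varepsilon>0$ in~(\ref{LMI}) is the same one used in the completion-of-squares step; once the matrix inequality is read off in this matched form, everything else is a direct appeal to the already-proved Theorem~\ref{master-slave}, so no new analytic machinery is required.
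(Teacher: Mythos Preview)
Your proposal is correct and follows essentially the same route as the paper's own proof: verify the QUAD condition~(\ref{QU}) for $f(x)=W_1x+W_2\Phi(x)+J$ via the $\varepsilon$-weighted completion-of-squares bound on the cross term together with the Lipschitz estimate $\|\Phi(x)-\Phi(x^\star)\|\le\|W_3(x-x^\star)\|$, then invoke Theorem~\ref{master-slave}. Your handling of the $1/2$ factors in the splitting step is in fact slightly cleaner than the paper's intermediate display, but the argument and conclusion are identical.
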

\begin{proof}
Denote $f(x)=W_1x(t)+W_2\phi(x(t))+J$, one just needs to prove that the QUAD condition (\ref{QU}) holds. Denote $e(t)=x(t)-x^{\star}$, and $\tilde{\Phi}(e(t))=\Phi(x(t))-\Phi(x^{\star})$, one can get
\begin{align*}
&(x(t)-x^{\star})^T(f(x(t))-f(x^{\star}))\\
=&(x(t)-x^{\star})^T[W_1(x(t)-x^{\star})+W_2(\Phi(x(t))-\Phi(x^{\star}))]\\
=&e(t)^T[W_1e(t)+W_2\tilde{\Phi}(e(t))]\\
\le&e(t)^TW_1e(t)+\varepsilon e(t)^TW_2W_2^Te(t)+\varepsilon^{-1}\tilde{\Phi}(e(t))^T\tilde{\Phi}(e(t))\\
\le&\frac{1}{2}e(t)^T\bigg[{W_1+W_1^T}+\varepsilon W_2W_2^T+\varepsilon^{-1}W_3^TW_3\bigg]e(t)\\
\le &\delta e(t)^Te(t).
\end{align*}
Then, according to Theorem \ref{master-slave}, we can get the conclusion.
\end{proof}

\begin{remark}
As for the existence of $\delta$ in condition (\ref{LMI}), one can first convert it to the matrix form, then use the Matlab toolbox Linear Matrix Inequality (LMI) to solve it.

On the other hand, one can also estimate the term as: $e(t)^TW_2\tilde{\Phi}(e(t))\le \|W_2\|\|W_3\|e(t)^Te(t)$, in this case,
\begin{align}\label{neural}
\delta\ge\lambda_{\max}(\frac{W_1+W_1^T}{2})+\|W_2\|\|W_3\|.
\end{align}
\end{remark}

\begin{remark}
In \cite{LHYC14}, the stabilization of nonlinear systems is also investigated, but they concern on finite-time stabilization, while in this paper we investigate the fixed-time stabilization. Of course, one can also use $V(t)=\frac{1}{2}\sum_{i=1}^n\xi_ie_i(t)^2, \xi_i>0, i=1,\cdots,n$ to investigate the fixed-time stabilization of the above Corollary, but in order to keep in accordance with Theorem \ref{master-slave}, here we don't consider the parameters $\xi_i$.
\end{remark}

\section{Numerical examples}\label{sim}
In this section, a simple numerical example is given to demonstrate the correctness of obtained theoretical results.

Consider a network of five agents, and the original dynamical behavior $x(t)$ of each node is described by is a 3-D neural network satisfying:
\begin{align}\label{simu2}
\dot{x}(t)=f(x(t))=-x(t)+W\Phi(x(t)),
\end{align}
where $x(t)=(x^1(t),x^2(t),x^3(t))^T$,
$\Phi(x(t))=(\phi(x^1(t)),\\\phi(x^2(t)),\phi(x^3(t)))^T$, $\phi(v)=(|v+1|-|v-1|)/2$,
and
\begin{align*}
W=\left(\begin{array}{rrr} 1.25&-3.2&-3.2\\-3.2&1.1&-4.4\\-3.2&4.4&1
\end{array}\right).
\end{align*}
This neural network has a double-scrolling chaotic attractor, see
Fig. \ref{chaotic}.
\begin{figure}
\begin{center}
\includegraphics[width=0.5\textwidth,height=0.25\textheight]{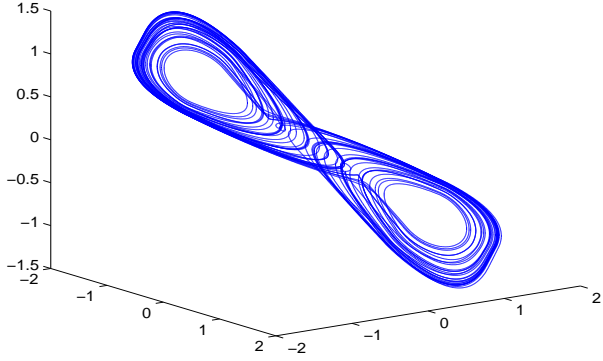}
\end{center}
\caption{Chaotic attractor of 3-D neural network
(\ref{simu2}) with initial values $(0.4,0.1,-0.2)^T$.} \label{chaotic}
\end{figure}
For this function $f(\cdot)$, from (\ref{neural}), one can easily get that $\delta=6.1$ can make the QUAD condition (\ref{QU}) hold.

Suppose the network is partitioned into two clusters as $\mathcal{C}_1=\{1,2\}$ and $\mathcal{C}_2=\{3,4,5\}$. As for the coupling matrix, we assume that
\begin{align*}
A=B=\left(\begin{array}{rr}
A_{11}&A_{12}\\
A_{21}&A_{22}
\end{array}\right),
\end{align*}
where
\begin{align*}
&A_{11}=\left(\begin{array}{rr}
-1&1\\
1&-1
\end{array}\right),
A_{12}=\left(\begin{array}{rrr}
-0.1&0.3&-0.2\\
0.1&-0.3&0.2
\end{array}\right),\\
&A_{21}=A_{12}^T, ~~~~~~~~~~~~~~~~~~~
A_{22}=\left(\begin{array}{rrrr}
-2&1&1\\
1&-2&1\\
1&1&-2
\end{array}\right).
\end{align*}

Assume the pinning controllers are added on the first node of each cluster, i.e., only node $1$ and $3$ are pinned. In this case, the network can be described by the following equations:
\begin{align}
&\dot{x}_{1}(t)=f(x_{1}(t))\nonumber\\
&+\alpha \cdot a_{12}\mathrm{sig}^p\bigg(x_2(t)-x_{1}(t)\bigg)+\mathrm{sig}^p\bigg(\sum_{j=3}^5a_{1j}x_j(t)\bigg)\nonumber\\
&+\beta\cdot b_{12}\mathrm{sig}^q\bigg(x_2(t)-x_{1}(t)\bigg)
+\mathrm{sig}^q\bigg(\sum_{j=3}^5b_{1j}x_j(t)\bigg)\nonumber\\
&-\alpha\mathrm{sig}^p(x_{1}(t)-s_1(t))-\beta\mathrm{sig}^q(x_{1}(t)-s_1(t));\nonumber\\
&\dot{x}_{2}(t)=f(x_{2}(t))\nonumber\\
&+\alpha \cdot a_{21}\mathrm{sig}^p\bigg(x_1(t)-x_{2}(t)\bigg)+\mathrm{sig}^p\bigg(\sum_{j=3}^5a_{2j}x_j(t)\bigg)\nonumber\\
&+\beta\cdot b_{21}\mathrm{sig}^q\bigg(x_1(t)-x_{2}(t)\bigg)
+\mathrm{sig}^q\bigg(\sum_{j=3}^5b_{2j}x_j(t)\bigg)\nonumber\\
&\dot{x}_{3}(t)=f(x_{3}(t))\nonumber\\
&+\alpha \sum_{j=4,5}a_{3j}\mathrm{sig}^p\bigg(x_j(t)-x_{3}(t)\bigg)+\mathrm{sig}^p\bigg(\sum_{j=1,2}a_{3j}x_j(t)\bigg)\nonumber\\
&+\beta\sum_{j=4,5}b_{3j}\mathrm{sig}^q
\bigg(x_j(t)-x_{3}(t)\bigg)+\mathrm{sig}^q\bigg(\sum_{j=1,2}b_{3j}x_j(t)\bigg)\nonumber\\
&-\alpha\mathrm{sig}^p(x_{3}(t)-s_2(t))-\beta\mathrm{sig}^q(x_{1}(t)-s_2(t));\nonumber\\
&\dot{x}_{i}(t)=f(x_{i}(t))\nonumber\\
&+\alpha \sum_{j\in\mathcal{C}_2,j\ne i}a_{ij}\mathrm{sig}^p\bigg(x_j(t)-x_{i}(t)\bigg)+\mathrm{sig}^p\bigg(\sum_{j=1,2}a_{ij}x_j(t)\bigg)\nonumber\\
&+\beta\sum_{j\in\mathcal{C}_2,j\ne i}b_{ij}\mathrm{sig}^q
\bigg(x_j(t)-x_{i}(t)\bigg)+\mathrm{sig}^q\bigg(\sum_{j=1,2}b_{ij}x_j(t)\bigg)\nonumber\\
&~~~~~~~~~~~~~~~~~~~~~~~~~~~~~~~~~~~~~~~~~~~~~~~~~~~i=4,5\label{simulation}
\end{align}
where $s_1(t)$ and $s_2(t)$ are two different trajectories described by (\ref{simu2}) with initial values $s_1(0)=(0.4,0.1,-0.2)^T$ and $s_2(0)=(0.1,0.1,0.1)^T$, respectively.

Now, for $p=0.5$ and $q=2$, according to notations in (\ref{pg1}) and (\ref{pg2}), one can get that
\begin{align*}
\hat{A}_{11}=-2A_{11}+\mathrm{diag}\{2^{\frac{4}{3}},0\}=
\left(\begin{array}{rr}
4.5198&2\\
2&-2
\end{array}\right),\\
\hat{A}_{22}=-2A_{22}+\mathrm{diag}\{2^{\frac{4}{3}},0,0\}
=\left(\begin{array}{rrrr}
6.5198&-2&-2\\
-2&4&-2\\
-2&-2&4
\end{array}\right),\\
\hat{B}_{11}=-2B_{11}+\mathrm{diag}\{2^{\frac{2}{3}},0\}=
\left(\begin{array}{rr}
3.5874&2\\
2&-2
\end{array}\right),\\
\hat{A}_{22}=-2A_{22}+\mathrm{diag}\{2^{\frac{2}{3}},0,0\}=\left(\begin{array}{rrrr}
5.5874&-2&-2\\
-2&4&-2\\
-2&-2&4
\end{array}\right),
\end{align*}
so parameters in (\ref{rho})-(\ref{g2}) are: $\rho_1=0.6395$, $\rho_2=0.4445$, $\overline{N}=18$, and
\begin{align}\label{parameter}
\overline{\alpha}=0.6013\alpha, \overline{\beta}=0.0988\beta, \gamma_1=5.4385, \gamma_2=0.5091.
\end{align}
According to Theorem \ref{main}, one can get that if $\overline{\alpha}-\gamma_1-2\delta>0$ and $\overline{\beta}-\gamma_2-2\delta>0$, then $\alpha\ge 29.3339$ and $\beta>128.5617$.

Define an index
\begin{align}
E(t)=\sqrt{\sum_{i=1}^2\|x_i-s_1(t)\|^2+\sum_{i=3}^5\|x_i-s_2(t)\|^2}
\end{align}
for cluster synchronization error. Let $\alpha=30$ and $\beta=130$, then one can get the settling time defined by (\ref{settling}) is $7.3956$, while the real settling time is about $0.1735$, which is far less than the theoretical value, see Fig. \ref{fixedsyn}.
\begin{figure}
\begin{center}
\includegraphics[width=0.5\textwidth,height=0.25\textheight]{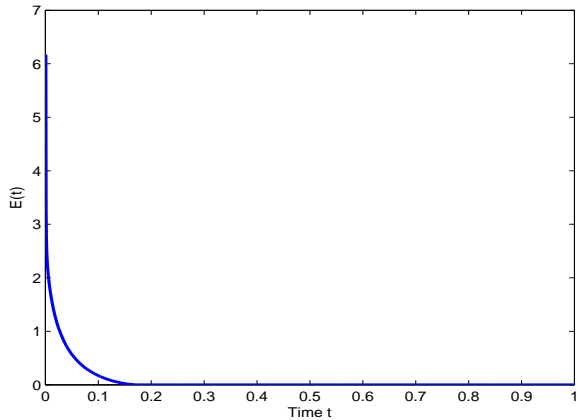}
\end{center}
\caption{Fixed-time cluster synchronization for network (\ref{simulation}) with the settling time $T_{\max}=0.1375$ for $\alpha=30$ and $\beta=130$.} \label{fixedsyn}
\end{figure}

In fact, when parameters $\alpha$ and $\beta$ are less than the calculated theoretical values, the fixed-time cluster synchronization can also be realized. In the following, we discuss the fixed-time cluster synchronization under different values of parameters, like $\alpha, \beta, p, q$, see Fig. \ref{diff-a}-Fig. \ref{diff-q}. All these simulations can coincide with our previous analysis about fixed-time cluster synchronization.

\begin{figure}
\begin{center}
\includegraphics[width=0.5\textwidth,height=0.25\textheight]{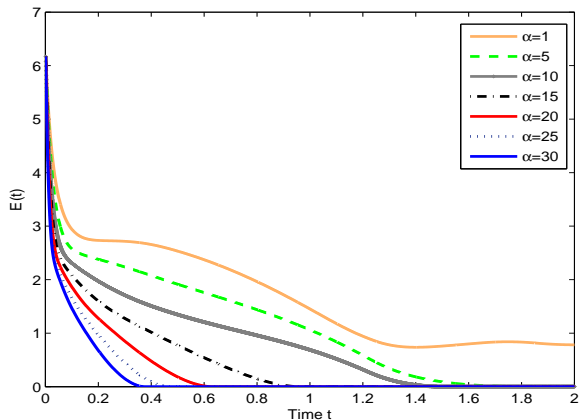}
\end{center}
\caption{Fixed-time cluster synchronization for network (\ref{simulation}) under $\beta=1$ and $\alpha=1,5,10,15,20,25,30$ respectively. Obviously, the larger the $\alpha$ is, the faster the cluster synchronization can be realized.} \label{diff-a}
\end{figure}

\begin{figure}
\begin{center}
\includegraphics[width=0.48\textwidth,height=0.25\textheight]{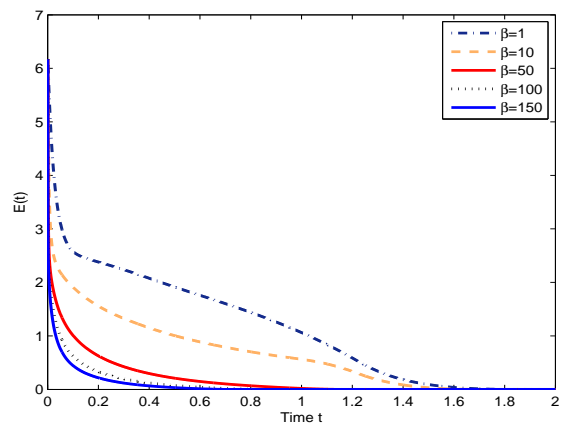}
\end{center}
\caption{Fixed-time cluster synchronization for network (\ref{simulation}) under $\alpha=5$ and $\beta=1,10,50,100,150$ respectively. Obviously, the larger the $\beta$ is, the faster the cluster synchronization can be realized.} \label{diff-b}
\end{figure}

\begin{figure}
\begin{center}
\includegraphics[width=0.5\textwidth,height=0.25\textheight]{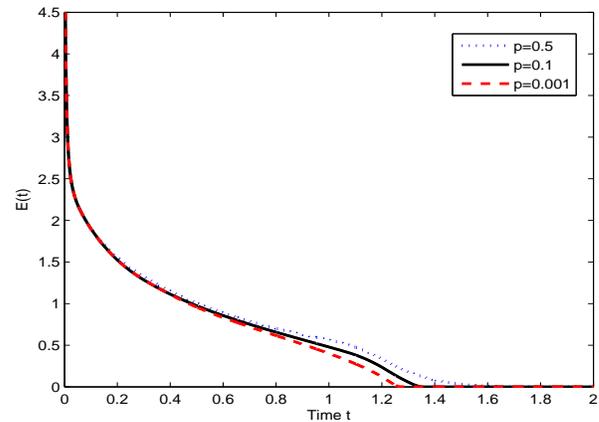}
\end{center}
\caption{Fixed-time cluster synchronization for network (\ref{simulation}) with $\alpha=5$, $\beta=10$, $q=2$ and $p=0.5, 0.1, 0.001$ respectively. One can find that, the smaller the $p$ is, the faster the cluster synchronization can be realized. Moreover, the times used for the network error from the initial values to $1$ are almost the same, since the parameter $q$ in these cases are the same.} \label{diff-p}
\end{figure}

\begin{figure}
\begin{center}
\includegraphics[width=0.48\textwidth,height=0.25\textheight]{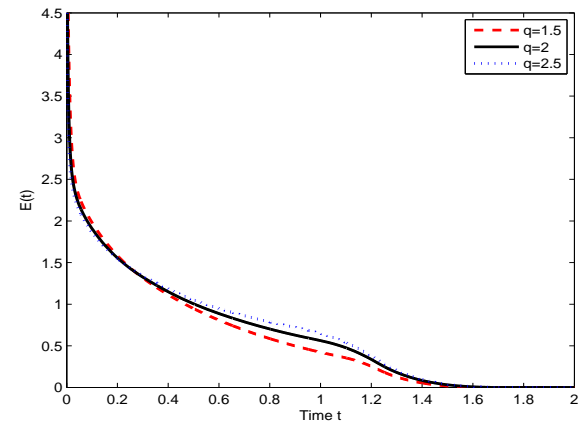}
\end{center}
\caption{Fixed-time cluster synchronization for network (\ref{simulation}) with $\alpha=5$, $\beta=10$, $p=0.5$ and $q=1.5, 2, 2.5$ respectively. One can find that, the settling times are almost the same, since the parameter $p$ in these cases are the same.} \label{diff-q}
\end{figure}

\section{Conclusion}\label{con}
In this paper, we study the fixed-time cluster synchronization problem, while previous works often concentrate on fixed-time consensus problems without control or finite-time complete synchronization/consensus problems. We first design a new distributed protocol which can be used to realize the fixed-time cluster synchronization, then rigorous proofs are given to show the validity of this new protocol. Moreover, when all the nodes in the network lie in the same cluster, it becomes the fixed-time complete synchronization problem, which is also carefully discussed, because it can contain the master-slave coupled case and the stability of the nonlinear systems case, which is applied on the investigation of the fixed-time stabilization of the equilibrium in neural networks. Finally, some numerical simulations are presented to show the correctness of our obtained theoretical results.

It is hoped that this paper may shed some light on the study
of fixed-time cluster synchronization via pinning control.
However, there are still many challenging problems to be investigated in the next step, for example: 1) the unified form for general fixed-time cluster synchronization and finding its key property; 2) the fixed-time cluster synchronization for directed networks, which is more general in the real world, while in this paper only undirected network topology is discussed; 3) the fixed-time cluster synchronization for networks with time delays; 4) the fixed-time cluster synchronization for networks with adaptive coupling mechanisms.


\end{document}